\documentclass[11pt]{article}
\usepackage[T1]{fontenc}

  \usepackage{amsthm}
\usepackage{amssymb}
\usepackage{amsmath}
\usepackage{longtable}
\usepackage{thm-restate}
\usepackage[inline]{enumitem}
\usepackage{tikz}
\usepackage{bbding}
\usetikzlibrary{arrows}
\usetikzlibrary{positioning}
\usetikzlibrary{fit}
\usetikzlibrary{decorations.pathmorphing}
\usetikzlibrary{patterns,patterns.meta,decorations.pathreplacing}
\tikzset{
    edge/.style={->,> = latex'}
}
\usepackage{hyperref} \usepackage[noabbrev]{cleveref}
\crefname{enumi}{property}{properties}
\crefname{corollary}{Corollary}{Corollaries}
  \usepackage[skip=0pt]{caption}
\usepackage{graphicx}
\usepackage{color}

\usepackage{algorithm}
\usepackage[noend]{algpseudocode}

\newcommand{\twfoarrow}[1]{\Rightarrow_{\text{TWFO}}^{#1}}

\newcommand{\disjun}{\uplus}

\newcommand{\sqsupsetsim}{\mathrel{\raisebox{0.5ex}{$\mathop{\kern0pt \sqsupset} \limits_{\textstyle\sim}$}}}
\newcommand{\sqsubsetsim}{\mathrel{\raisebox{0.5ex}{$\mathop{\kern0pt \sqsubset} \limits_{\textstyle\sim}$}}}

\newcommand{\vars}{\operatorname{vars}}

\newcommand{\comp}{\operatorname{comp}}

\newcommand{\mgu}{\operatorname{mgu}}

\newcommand{\dom}{\operatorname{dom}}
\newcommand{\cdom}{\operatorname{codom}}

\newcommand{\shortrules}[6]{\noindent\begin{minipage}{#6ex}{\bfseries #1}\end{minipage} $\;$ #2 $\;\Rightarrow_{\text{#5}}\;$ #3 \par\smallskip\noindent #4}

\newcommand{\CDCLBacktrackStack}[7]{
\tikzset{data/.style={minimum width = 0.6cm, minimum height = 0.6cm, draw, inner sep = 0}}
\def\a{0}
\if#11 \edef\a{1} \else \ifx#12 \edef\a{2} \else \edef\a{0} \fi \fi
\node [data, label={[align=center]below:decision\\level}] (A) {\a};
\node [data, right=0.7cm of A, label={[align=center]below:control\\stack}] (B) {0};
\node [minimum width = 0.6cm, minimum height = 0.6cm, right = 0.7cm of B, label={[align=center]below:trail}] {};
\if#20 \node [data, above=0cm of B] (D) {0}; \else \if#21 \node [data, above=0cm of B] (D) {0}; \node [data, above=0cm of D] {1}; \fi \fi
\if#31 \node [data, right = 0.7cm of B] (C) {$P$}; \fi
\foreach \nb / \of / \name / \label in {{#4/C/E/$\neg Q$}, {#5/E/F/$\neg T$}, {#6/F/G/$S$}, {#7/G//$\neg R$}} {
	\if\nb1 \node [data, above=0cm of \of] (\name) {\label}; \fi
}
}

  \newtheorem{lemma}{Lemma}
  
  \theoremstyle{definition}
  \newtheorem{definition}[lemma]{Definition}
  \newtheorem{example}[lemma]{Example}
  \newcommand{\inst}[1]{\textsuperscript{#1}}
  \newcommand{\orcidID}[1]{}
  \newcommand{\authorrunning}[1]{}
  \newcommand{\titlerunning}[1]{}
  \newcommand{\email}[1]{\texttt{#1}}
  \newcommand{\institute}[1]{\date{\normalsize{\def\and{\\[0.3em]}#1}}}
  \newcommand{\keywords}[1]{\par\smallskip\noindent\textbf{Keywords: }{\def\and{; }#1}}
  \newcommand{\doi}[1]{\href{https://doi.org/#1}{\nolinkurl{https://doi.org/#1}}}

\makeatletter
\newenvironment{breakablealgorithm}
  {\begin{center}
     \refstepcounter{algorithm}\hrule height.8pt depth0pt \kern2pt\renewcommand{\caption}[2][\relax]{{\raggedright\textbf{\fname@algorithm~\thealgorithm} ##2\par}\ifx\relax##1\relax \addcontentsline{loa}{algorithm}{\protect\numberline{\thealgorithm}##2}\else \addcontentsline{loa}{algorithm}{\protect\numberline{\thealgorithm}##1}\fi
       \kern2pt\hrule\kern2pt
     }
  }{\kern2pt\hrule\relax \end{center}
  }
\makeatother

\begin{document}
\title{A Two-Watched Literal Scheme for First-Order Logic}
\author{
Yasmine Briefs\inst{1,2}\orcidID{0009-0007-9917-7517}
\and Martin Bromberger\inst{1}\orcidID{0000-0001-7256-2190}
\and Tobias Gehl\inst{1}
\and Lorenz Leutgeb\inst{1,2}\orcidID{0000-0003-0391-3430}
\and Simon Schwarz\inst{1,2}\orcidID{0000-0002-0064-2922}
\and Christoph Weidenbach\inst{1}\orcidID{0000-0001-6002-0458}}
\authorrunning{Y.~Briefs et al.}
\institute{Max Planck Institute for Informatics, Saarbrücken, Germany\\
\email{\{ybriefs,mbromber,tgehl,lorenz,sschwarz,weidenbach\}@mpi-inf.mpg.de} \and
Graduate School of Computer Science,\\Saarland Informatics Campus, Saarbrücken, Germany}

\maketitle

\begin{abstract}
\noindent
The two-watched literal scheme, a core component of efficient CDCL (Conflict-Driven Clause Learning) implementations for propositional logic, is extended to first-order logic.
Given a set of first-order clauses and a set of ground literals, our lifted two-watched literal scheme
efficiently detects all propagating and false clauses with respect to the ground literals.
We present the algorithm as a system of rules and prove its soundness and completeness.
Additionally, we provide an implementation of the two-watched literal scheme, which
outperforms a standard dynamic programming approach for detecting
propagatable literals and conflicts, especially when dealing with long clauses.

\keywords{two-watched literal scheme\and first-order logic}
\end{abstract}

\section{Introduction}\label{section:introduction}

The two-watched literal scheme is one key technology for the efficiency of modern CDCL SAT solvers \cite{DBLP:conf/iccad/SilvaS96,DBLP:conf/dac/MoskewiczMZZM01}.
The solvers build a \emph{trail} $M$, a consistent sequence of propositional literals, representing a
partial model for the considered set of clauses $N$. Then the clauses in $N$ are checked for
propagations or conflicts with respect to $M$. If a clause is false in $M$ it is a conflict.
If it is false for all literals except for one literal  undefined in $M$, it propagates this literal.
Clearly, watching the status of two different literals in any clause with respect to $M$
is sufficient to detect propagations and conflicts. While $M$ changes, these literals have to
be updated satisfying in particular the invariant: if one of the watched literals is false
and the other watched literal is not true,
then  all other literals of the clause are false. The two-watched literal scheme saves updates
on clauses while $M$ is extended and no update is needed if $M$ is shrunk.
Obviously, the savings depend on the length of clauses. For short clauses, e.g., clauses
of length two, the two-watched literal scheme has no advantage over standard algorithms detecting propagation
and conflict. Modern SAT solvers typically implement separate mechanisms for short clauses, in
particular for clauses of length one and two \cite{sorensson2005minisat}.

We have already suggested a lifting of the two-watched literal scheme to first-order logic in a workshop~\cite{DBLP:conf/paar/BrombergerGLW22}.
The inference system presented in Section~\ref{sec:calculus} is a refinement of this system. This paper
presents the first implementation of the first-order two-watched inference system and experimentally compares it with a standard dynamic
programming algorithm detecting propagations and conflicts.

Detecting propagations and conflicts in first-order logic with respect to a consistent sequence of ground literals, a (first-order) \emph{trail}, is
more complicated than in propositional logic, it even is NP-complete (\Cref{proposition:conflictNPhard}). This complexity comes from universally quantified variables in clauses. While the trail of ground
literals corresponds to a propositional trail, the detection of propagations 
and conflicts gets far more complex. For example, the first-order clause\newline
\centerline{$R(a,x) \lor R(y,x) \lor R(y,b)$}
already propagates the literal $R(a,b)$ with respect to an empty trail $M$. So in contrast to propagation in a SAT
setting, due to factorization, we may obtain propagations even though not all literals in the clause apart from the propagated literal
are falsified by the trail.
In addition, due to the existence of variables, a clause can disseminate more than one instance; actually, one
instance for every ``different'' instantiation. We consider an instantiation to be different if the actual
instantiating substitutions are different for two instances of the same clause.

We show that our two-watched literal scheme outperforms a dynamic programming approach to the
detection of false clauses and propagatable literals. To this end, we use the ground trail model building
by SCL(FOL)~\cite{BSW23SCL,BrombergerEtAl23CADE}. Whenever a new literal enters the trail, we independently call the implemented
two-watched literal scheme and the dynamic programming algorithm and compare the runtime and the number of
generated clause instances.

The paper is structured as follows: \Cref{sec:prelim} introduces the concepts and notations used in this paper. In \Cref{sec:problem}, we formally state the problem of incremental propagation and conflict detection and show NP-completeness.
\Cref{sec:naive} presents a baseline dynamic programming algorithm that solves the problem. In \Cref{sec:calculus}, we present rules of the two-watched literal scheme. This section also contains all relevant correctness proofs.
\Cref{section:evaluation} then shows empirical efficiency results of both approaches. Last, \Cref{section:conclusion} concludes the paper.

This is an extended version of a paper accepted at IJCAR 2026 \cite{briefs_2026_ijcar}
The implementation artifacts used in the evaluation can be found on Zenodo \cite{briefs_2026_20138614}.

\section{Preliminaries} \label{sec:prelim}

We assume a first-order language without equality where
$N$ denotes a clause set;
$C, D$ denote clauses;
$L, K, H$ denote literals;
$A, B$ denote atoms;
$P, Q, R$ denote predicates;
$t, s$ terms;
$f, g, h$ function symbols;
$a, b, c$ constants;
and $x, y, z$ variables.
Atoms, literals, clauses and clause sets are considered as usual, where
in particular clauses are identified both with their disjunction and the multiset
of their literals.
The complement of a literal is denoted by the function $\comp$.
Semantic entailment $\models$ is defined as usual where variables in clauses
are assumed to be universally quantified.
Substitutions $\sigma, \tau$ are total mappings from variables to terms, where
$\dom(\sigma) := \{x \mid x\sigma\neq x\}$ is finite and $\cdom(\sigma) := \{ t\mid x\sigma = t, x\in\dom(\sigma)\}$.
Their application is extended to literals, clauses, and sets of such objects in the usual way. In particular, we refer to a clause $C\sigma$ as a \emph{clause instance} of $C$ and call a literal $L\sigma$ an instance of $L$.
The function $\vars(Z)$ returns the set of variables in a term/atom/literal/clause/clause set $Z$.
A term, atom, clause, or a set of these objects is \emph{ground} if it does not contain any variable.
A substitution $\sigma$ is \emph{ground} if $\cdom(\sigma)$ is ground. A substitution $\sigma$ is \emph{grounding}
for a term $t$, literal $L$, clause $C$ if $t\sigma$, $L\sigma$, $C\sigma$ are ground, respectively.
A substitution $\sigma$ is the \emph{minimally grounding substitution} for a term/atom/literal/clause $Z$ if $Z \sigma$ is ground and $\dom(\sigma) = \vars(Z)$.
The function mgu denotes the \emph{most general unifier} of two terms, atoms, literals.
We assume that any mgu of two terms or literals does not introduce any fresh variables and is idempotent. In this work, we never compute unifiers between different clauses. Thus, we do not require different clauses to be variable disjoint.

\section{First-order Propagation \& Conflict Detection}

This section first states the formal problem of (incremental) first-order propagation and conflict detection, and classifies this problem as NP-complete. \Cref{sec:naive} then presents our baseline dynamic programming algorithm. In \Cref{sec:calculus}, we present our main result, the two-watched literal scheme for first-order logic, prove its correctness, and give an example run.

\subsection{Problem Statement} \label{sec:problem}

Formally, we consider the following problem: The algorithm is given a set of first-order clauses $N$, and a \emph{trail} $M$. The trail is a consistent sequence of ground literals. Literals can be appended or removed from the end of the trail in a classical stack-like fashion. In our setting, the trail forms a ground partial model assumption for $N$. Given $N$ and $M$, the algorithm needs to recognize and handle two scenarios:
\begin{itemize}
	\item If the clause set $N$ is false with respect to the partial model assumption, the algorithm should detect this and find a falsified clause.
	\item If there is no conflict, the algorithm should compute any propagatable literals from $N$ under $M$.
\end{itemize}
To this end, \Cref{def:canprop} formally defines what it means that a clause or a clause instance can propagate a literal with respect to a trail.
\begin{definition} \label{def:canprop}
A clause $C$ \emph{propagates} the literal $L\sigma$  with respect to a trail $M$ if $C\sigma = C'\sigma \lor L\sigma \lor \ldots \lor L\sigma$, $L\sigma\not\in C'$,
$L\sigma$ is undefined with respect to $M$ and $M \models \neg C'\sigma$. A clause $C$ is a \emph{conflict} with respect to a trail $M$ if there exists a grounding substitution $\sigma$ such that $M\models \neg C\sigma$.
\end{definition}

\noindent
Please note that the above definition enables the propagation of non-ground literals.
The overall problem is not considered in an isolated setting. Instead, we are interested in an incremental setting,
where dynamic changes to the trail $M$ and the clause set $N$ happen. The following three operations are of particular interest:
\begin{itemize}
	\item A single new ground literal is appended to the trail $M$.
	This, for example, corresponds to standard operations in model building calculi like CDCL or  SCL where decisions or propagations are added to the trail.
      \item A trail suffix of $M$ is removed. This corresponds to a backtracking step, where parts of a (conflicting) partial model assumption are deleted with respect to a learned clause.
        If the removed suffix equals the whole trail, this application also captures a restart of the integrating calculus.
	\item A new clause is added to the clause set $N$. This corresponds to learning a new clause in the integrating calculi.
\end{itemize}

\subsubsection{Problem Complexity}
In propositional logic, conflict and propagation detection can be solved in linear time for fixed (partial) models and clauses. In contrast, both problems, even in the non-incremental setting, are NP-complete for first-order logic, as shown in \Cref{proposition:conflictNPhard,corollary:propagationNPhard}.

\begin{restatable}{lemma}{propositionConflictNPhard}\label{proposition:conflictNPhard}
	The following problem is NP-complete: Given a trail $M$ and a clause $C$, determine if there is a grounding $\sigma$ such that $C\sigma$ is false in $M$.
\end{restatable}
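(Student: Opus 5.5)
We prove membership in NP and NP-hardness separately.

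\emph{Membership in NP.} The certificate is the restriction of $\sigma$ to $\vars(C)$. We may assume every variable of $C$ is mapped to a ground term occurring as an argument position subterm of some literal in $M$; otherwise we could ignore it. More precisely, for $C\sigma$ to be false in $M$, every literal $L\sigma$ must have its complement $\comp(L\sigma)$ on $M$. Every variable of $C$ occurs in some literal $L$, and $\comp(L)\sigma$ must equal a trail literal. Hence $x\sigma$ is a subterm of a literal in $M$, so $\sigma$ has size polynomial in $|M|$.

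Verification of a guessed $\sigma$ is then done literal by literal. For each $L \in C$ we compute $L\sigma$ and check, in polynomial time, whether $\comp(L\sigma)$ occurs in $M$. This places the problem in NP.

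\emph{NP-hardness.} We reduce from graph $3$-colorability, which is the main design step; alternatively one could reduce from 3-SAT, but coloring is cleanest. Given a graph $G=(V,E)$, take three constants $r,g,b$ and a binary predicate $E'$. The trail $M$ consists of the six ground literals $\neg E'(c,d)$ for all $c\neq d$ with $c,d\in\{r,g,b\}$. This is consistent, since it contains only negative literals. Introduce one variable $x_v$ per vertex $v$, and let
\[
C=\bigvee_{\{u,v\}\in E} E'(x_u,x_v),
\]
choosing an arbitrary orientation for each edge.

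We then show the reduction is correct. A grounding $\sigma$ makes $C\sigma$ false in $M$ iff every $E'(x_u,x_v)\sigma$ has its complement in $M$. That holds iff $x_u\sigma,x_v\sigma\in\{r,g,b\}$ and $x_u\sigma\neq x_v\sigma$ for every edge, i.e.\ iff $\sigma$ restricted to $V$ is a proper 3-coloring. Isolated vertices do not occur in $C$ and impose no constraint. The construction is clearly polynomial, since $C$ has size $O(|E|)$ and $M$ has constant size.

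The only subtle point is the NP-membership argument: it relies on grounding substitutions being restrictable to subterms of $M$, which the argument above settles because every variable of $C$ appears in some literal that must match a trail literal.
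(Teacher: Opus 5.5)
Your proof is correct. The membership part is the paper's argument, with more care: you justify why a falsifying $\sigma$ only needs to map variables to subterms of $M$, which the paper compresses into ``the size of $\sigma$ is bounded linearly in $|M|$''. Your phrase ``otherwise we could ignore it'' is superfluous, since your ``more precisely'' sentence is already the right argument.

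The hardness part takes a genuinely different route. The paper reduces from 3SAT. Each propositional clause $C_l$ becomes a literal $\neg P_l(x_i,x_j,x_k)$ with its own ternary predicate, and the trail receives the seven tuples over $\{a,b\}$ that satisfy $C_l$. So the trail has $7n$ literals, and both the trail and the signature grow with the formula.

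Your reduction from 3-colorability instead uses a fixed trail of six negative literals over one binary predicate and three constants, and puts all instance dependence into the clause. This is the familiar homomorphism-to-$K_3$ (conjunctive-query) view. It buys a slightly stronger statement: the problem remains NP-hard even when $M$ and the signature are fixed, so the hardness comes purely from variable sharing in $C$.

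The paper's encoding is a more literal transcription of propositional satisfiability, with one first-order literal per propositional clause and $a/b$ as truth values. This fits its SAT-oriented presentation. For the follow-up reduction to propagation detection (adding a fresh literal $Q$ to $C$), either construction serves equally well.
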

\newcommand{\propositionConflictNPhardProof}{
\begin{proof}
	Inclusion in NP is straightforward by guessing $\sigma$ and checking ground falsification, as the size of $\sigma$ is bounded linearly in $|M|$.
	The hardness is proven by reduction from 3SAT.
	Assume we are given propositional clauses $C_1,\dots,C_n$ with three literals each in the propositional variables $X_1,\dots,X_m$.
	Without loss of generality, each of these variables occurs in at least one clause.
	The first-order signature consists of the ternary predicates $P_1,\dots,P_n$ and the constants $a$ (which stands for true) and $b$ (which stands for false).
	We now construct a clause $C$ in the pairwise distinct variables $x_1,\dots,x_m$.
	For each propositional clause $C_l=(\neg)X_i\lor(\neg)X_j\lor(\neg)X_k$, with $l\in\{1,\dots,n\}$ and $i,j,k\in\{1,\dots,m\}$, we add the first-order literal $\neg P_l(x_i,x_j,x_k)$ to the clause $C$.
	Further, we define the following set of literals: $S_l:=\{P_l(x, y, z) \mid x, y, z \in\{a, b\}\}$.
	Now, let $t_{l,1}=b$ if $X_i$ occurs positively in $C_l$, and $t_{l,1}=a$ if $X_i$ occurs negated.
	We define $t_{l,2}$ for $X_j$ and $t_{l,3}$ for $X_k$ analogously.
	Then, for the clause $C_l$, we add all literals from the set $S_l\setminus\{P_l(t_{l,1},t_{l,2},t_{l,3})\}$ to the trail $M$.

	Now we prove that there exists a grounding $\sigma$ such that $C\sigma$ is false in $M$ if and only if there is a valuation $v$ that satisfies $C_1\land\dots\land C_n$.
	First, assume that there exists a grounding $\sigma$ such that $C\sigma$ is false in $M$.
	Clearly, since we assumed that each variable occurs in at least one clause, each of the variables $x_1,\dots,x_m$ must be mapped to $a$ or $b$ by $\sigma$.
	We define a valuation $v$ as $X_i\mapsto\texttt{true}$ if $\sigma(x_i)=a$ and $X_i\mapsto\texttt{false}$ if $\sigma(x_i)=b$.
	Take a clause $C_l=(\neg)X_i\lor(\neg)X_j\lor(\neg)X_k$, then since $C\sigma$ is false in $M$, $P_l(x_i,x_j,x_k)\sigma$ must occur in $M$.
	As we removed the only valuation that would make $C_l$ false from $S_l$, it follows that $v$ satisfies $C_l$.
	Now assume that there exists a valuation $v$ that satisfies $C_1\land\dots\land C_n$.
	We define a substitution $\sigma$ as $\sigma(x_i)=a$ if $v(X_i)=\texttt{true}$ and $\sigma(x_i)=b$ otherwise.
	By an analogous argument, it follows that $C\sigma$ is false in $M$.
\end{proof}
}
\propositionConflictNPhardProof

\begin{restatable}{corollary}{corollaryPropagationNPhard}\label{corollary:propagationNPhard}
	The following problem is NP-complete: Given a trail $M$ and clause $C$, determine if there is any literal that $C$ can propagate with respect to $M$.
\end{restatable}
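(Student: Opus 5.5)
We establish membership in NP by guessing and checking, and hardness by reducing the conflict problem of \Cref{proposition:conflictNPhard} to the propagation problem.

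\emph{Membership in NP.} A certificate is a grounding substitution $\sigma$ for $\vars(C)$ together with a choice of literal $L\in C$. Variables not occurring in $M$ may be mapped to a single fresh constant, so $\sigma$ has size polynomial in $|M|+|C|$. We then check the conditions of \Cref{def:canprop}: split $C\sigma$ into the occurrences equal to $L\sigma$ and the rest $C'\sigma$, check that $L\sigma$ is undefined in $M$, and check that every literal of $C'\sigma$ has its complement in $M$. All of this is polynomial. Since \Cref{def:canprop} also allows non-ground propagated literals, we note that whenever a non-ground $L\sigma$ is propagated, we can instantiate its remaining variables with a fresh constant. Then $C'\sigma$ stays false, and the instance of $L\sigma$ stays undefined in $M$, provided the fresh constant does not occur in $M$ and the same variables are instantiated in $C'\sigma$ consistently. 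For the decision problem it therefore suffices to guess ground $\sigma$, with care that the new instance of $C'$ remains false. A cleaner alternative is to guess $\sigma$ on the variables of $C'$ only and check that the residual $L\sigma$ is not a variant unifiable with a complement on $M$ in a way that makes it defined.

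\emph{Hardness.} We reduce from the problem of \Cref{proposition:conflictNPhard}. Given $M$ and $C$, let $C^+ = C\lor Q$, where $Q$ is a fresh nullary predicate (or $Q(c)$ with a fresh constant $c$) that does not occur in $M$. If $C\sigma$ is false in $M$ for some grounding $\sigma$, then $C^+\sigma = C\sigma\lor Q$. Here $Q\notin C\sigma$, $Q$ is undefined in $M$, and $M\models\neg C\sigma$, so $C^+$ propagates $Q$. Conversely, suppose $C^+$ propagates some literal $L\sigma$. If the propagated literal is $Q$, then the remaining literals $C\sigma$ are false in $M$; extending $\sigma$ to a grounding substitution preserves falsity, since literals in $M$ are ground. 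If the propagated literal is instead some literal from $C$, then $Q$ would belong to $C'\sigma$ and would have to be false in $M$. This is impossible because $Q$ does not occur in $M$. Hence $C^+$ propagates a literal iff $C$ is a conflict, and the reduction is clearly polynomial.

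The main obstacle is the NP-membership detail about non-ground propagated literals, which is why the argument above fixes the certificate to a ground $\sigma$ for $\vars(C')$. The hardness direction is routine, and the only point to verify there is that the fresh literal is never defined by $M$.
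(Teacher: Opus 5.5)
Your route matches the paper's. Membership is by guess-and-check, and hardness is the reduction $C\mapsto C\lor Q$ with a fresh predicate $Q$ from the conflict problem of \Cref{proposition:conflictNPhard}. The hardness half is correct and more explicit than the paper, which only asserts that $C\lor Q$ can propagate nothing but $Q$. You justify this: any other propagated literal would put $Q$ into $C'\sigma$, and $Q$ would then have to be false in $M$.

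The step that fails as written is the size bound in the membership part. You cannot send all variables not fixed by $M$ to a single fresh constant, because the occurrences forming the propagated literal must be unified. For $C=P(x)\lor P(f(y))$ and $M=\varepsilon$, $C$ propagates $P(f(y))$, but $x,y\mapsto c$ yields $P(c)\lor P(f(c))$, which propagates nothing.

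Worse, this unification can force exponentially large terms. Take $C=R(x_1,\ldots,x_n)\lor R(f(x_0,x_0),\ldots,f(x_{n-1},x_{n-1}))$ and $M=\varepsilon$. Then $C'$ must be empty, so every witnessing substitution unifies both literals and maps $x_n$ to a term of tree size at least $2^n$. A ground $\sigma$ written out as terms is therefore not a polynomial certificate.

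The repair is to guess the split of $C$ into $C'$ and the group of literals that become the propagated literal, together with $\sigma$ on $\vars(C')$ only. Those images are subterms of $M$, since $C'\sigma$ must be false, exactly as in \Cref{proposition:conflictNPhard}. The verifier then:
\begin{itemize}
\item checks $M\models\neg C'\sigma$;
\item decides unifiability of the instantiated group in polynomial time;
\item if the unified literal is ground, checks that it is undefined in $M$.
\end{itemize}
Alternatively, represent $\sigma$ with shared subterms.

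By the paper's convention non-ground literals are always unassigned, so the fresh-constant detour is unnecessary. Your ``cleaner alternative'' test of $L\sigma$ against complements in $M$ is also unnecessary. What that alternative actually lacks is the unifiability check for the group.
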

\newcommand{\corollaryPropagationNPhardProof}{
\begin{proof}
	Inclusion is again straightforward by guessing the literal and the substitution under which the literal propagates, and checking for ground propagation.
	Hardness is shown by reduction from the problem from \Cref{proposition:conflictNPhard}.
	Assume we are given a clause $C$ and a trail $M$.
	Let $Q$ be a fresh predicate.
	Then $C\lor Q$ propagates the literal $Q$ with respect to $M$ if and only if $C$ has an instance that is false with respect to $M$. $C\lor Q$ cannot propagate any other literal.
\end{proof}
}
\corollaryPropagationNPhardProof

\subsection{Baseline: A Dynamic Programming Algorithm} \label{sec:naive}

At the core of our baseline dynamic programming algorithm is the function CheckClause. It checks a single clause $C$ for propagations or conflicts under a given model $M$.

\begin{breakablealgorithm}
	\caption{Check a given clause $C$ for propagations and conflicts under $M$}\label{algorithm:checkClauseNaive}
	\begin{algorithmic}[1]
	\item[] \textbf{Input:} $M$ is a trail, $C$ is the clause to check, $K$ is either \texttt{NULL} or a literal representing a propagation and $\sigma$ defines the instance of $C$ that is currently considered
	\item[] \textbf{Output:} Returns \texttt{Conflict} if $C(\lor K)\sigma$ has a false instance under $M$, otherwise adds all propagations of $C(\lor K)\sigma$
		\Statex
		\Function{CheckClause}{$M,C,K,\sigma$}
			\If{$C = \emptyset$}
				\If{$K=\texttt{NULL}$}
					\State\Return{$\texttt{Conflict}$}
				\EndIf
				\If{$(K\sigma)$ is non-ground or not defined in $M$}
					\State{add ($K\sigma$) to the set of propagations}
				\EndIf
				\State\Return{\texttt{No Conflict}}
			\EndIf
			\State $C = \{ L \} \uplus C'$
			\If{$L\sigma$ is ground and true in $M$}
				\State\Return{\texttt{No Conflict}}
			\EndIf
			\State $\textit{insts}\gets\text{ instances of $\comp(L\sigma)$ in $M$ (using discrimination tree)}$
			\For{$L'$ in \textit{insts}}
				\If{\Call{CheckClause}{$M,C',K,\sigma\mgu(L\sigma,L')$} $=$ \texttt{Conflict}}
					\State\Return{\texttt{Conflict}}
				\EndIf
			\EndFor
			\If{$K=\texttt{NULL}$}
				\State\Return{\Call{CheckClause}{$M,C',L,\sigma$}}
			\EndIf
			\If{$L\sigma$ is unifiable with $K\sigma$}
				\State\Return{\Call{CheckClause}{$M,C',K,\sigma\mgu(L\sigma,K\sigma)$}}
			\EndIf
			\State\Return{\texttt{No Conflict}}
		\EndFunction
	\end{algorithmic}
\end{breakablealgorithm}

The check in line 9 serves to avoid unnecessary recursion; a true clause cannot propagate or be false.
The loop in lines 12 to 14 recurses over all possibilities to make the current literal $L\sigma$ false.
Finally, in lines 15 to 18, the algorithm tries to propagate the current literal $L\sigma$.
If there is no propagation yet, i.e., $K=\texttt{NULL}$, it can simply recurse with $L$ as the propagation, but if $K\neq\texttt{NULL}$, it can only recurse if $L\sigma$ and $K\sigma$ are unifiable.

To use the function CheckClause in an incremental setting, the algorithm maintains as its state a list of propagations together with the indices of the respective literals that made the propagations possible.
When a new literal is added to the trail, all clauses that contain generalizations of this literal are determined using a discrimination tree.
Then, CheckClause is called for each of these clauses with $\sigma$ being the matcher from the clause literal to the trail literal.
When a new clause is added, it is additionally necessary to check if all of its literals are unifiable, and if so, add a unit propagation.
When a suffix of the trail is removed, all that has to be done is to remove all propagations from the state that only became possible due to literals in the deleted suffix.

The running time of the function CheckClause can be estimated as $\mathcal{O}((1+|M|)^{|C|})$, which gives a running time of $\mathcal{O}(n\cdot(1+|M|)^{m})$ where $n$ is the number of clauses and $m$ is the maximal number of literals in a clause for the addition of a new literal to the trail.

\subsection{A Two-watched Literal Scheme} \label{sec:calculus}

We give the two-watched literal scheme for first-order logic in the form of a rule-based calculus. First, we give rules that find conflicts and propagations with respect to a fixed trail $M$. We prove soundness and completeness for this calculus.
Second, we consider the incremental setting described in \Cref{sec:problem}. The incremental operations are realized via parametrized \emph{steering} rules in our calculus. Concretely, these operations are modelled by the three rules AddLiteral($L$), RemoveLiteral, and AddClause($C$).
The rule AddLiteral($L$) allows adding a new ground literal $L$ to the trail.
Conversely, the rule RemoveLiteral removes a single literal from the trail. Multiple subsequent applications allow removing a trail suffix of arbitrary length. Last, the rule AddClause($C$) adds a clause $C$ to the considered clause set $N$.

\subsubsection{The Calculus State}
The calculus works on a tuple $(M; O; F; D)$. The state consists of the following components:
\begin{itemize}
	\item[$M$] is a trail consisting of consistent ground literals, representing the current partial model assumption.
	\item[$O$] is a set of tuples $(C; L_1; L_2)$ where $C$ is a clause instance and
		$L_1$ and $L_2$ are the two literals that are watched in $C$.
		A clause instance $(C; L_1; L_2) \in O$ can be annotated with $\top$, i.e. $(C; L_1; L_2)^{\top}$,
		which means that $C$ is an initial or learned clause, i.e. a not instantiated clause instance. All other clause instances are not annotated.
                Instances are never removed.
              \item[$F$] is a set of annotated literals $L^{C;K}$ with respect to $M$. The literal $K$ is the leftmost literal
                in $M=M'KM''$ such that $L$ propagates from $M'K$ and $C$.
		The literal $K$ is used for backtracking, i.e., the rule RemoveLiteral below.
		If $K=\top$ it can be propagated with an empty trail from $C$, i.e.,
		$C$ can be factorized to a unit clause.
	\item[$D$] is the conflict clause, or $\top$.
\end{itemize}

\noindent
A valid start state is $(M; O; F; \top)$ for a clause set $N$, trail $M$ that may be empty, and $O := \{(C; L_1; L_2)^\top \mid C\in N, \text{\ for some $L_1, L_2\in C$, $L_1 \not= L_2$ iff $|C|>1$}\}$.
We assume that initially propagatable units, potentially buildable by factoring, are all contained in $F$, and no non-propagatable literal is in $F$.
Further, we call a start state \emph{empty} if $M = \varepsilon$.

\subsubsection{The Calculus Rules}
In the following, we give the calculus rules formally in the form of a rewrite system on the above state.

The rule system computing propagations and conflicts consists of the \emph{internal rules} CreateInstance, UpdateWatched, FactorizeWatched, DetectPropLiteral, and Conflict. It does not depend on a particular way $M$ is created. Soundness, completeness, and termination are shown for any start state, including start states with a non-empty $M$.

Then, in  a second step, we introduce the \emph{steering rules} AddLiteral($L$), RemoveLiteral, and AddClause($C$), which manipulate the trail and clause set. We prove that, with a reasonable strategy and from an empty start state, rule applications of internal rules can be heavily restricted to consider only watched literals, or newly created instances built thereof.

We use the following terminology:
A literal $L$ is called \emph{true} if $L\in M$, \emph{false} if $\comp(L) \in M$, \emph{assigned} if it is either true or false, and \emph{unassigned} otherwise. Note that non-ground literals are always unassigned.
Further, a literal $L_j\sigma$ is \emph{fresh} in $(C; L_1; L_2)$ if $L_j\sigma\in C\sigma$, $L_j\sigma \not= L_1\sigma$, $L_j\sigma \not= L_2\sigma$.
In the whole following section, without loss of generality, the watched literals $L_1$ and $L_2$ should be considered interchangeably.
A state $(M; O; F; \top)$ is always implicitly associated with its clause set $N = \{ C \mid (C; L_1; L_2)^\top \in O \}$. The associated clause set $N$ will remain invariant under all rule applications, except if new clauses are added.
Last, note that in the following section, annotations to clauses in $O$ and $F$ are only explicitly written when they are required, and may be omitted.

\medskip

\shortrules{CreateInstance}{$(M; O \disjun \{ (C; L_1; L_2) \}; F; \top)$ \\ \hspace*{4.86em}}{$(M; O \cup \{ (C; L_1; L_2), (C\sigma; L_j\sigma; L_k\sigma) \}; F; \top)$}
{provided all of the following:
\begin{enumerate*}[label=(\roman*), itemjoin={,\quad}]
\item $\comp(L_1\sigma) \in M$ for some minimal grounding substitution $\sigma$
\item $L_j\sigma\in C\sigma$
\item if not all literals in $C\sigma$ are equal then $L_k\sigma \not= L_j\sigma$ 
\item if possible $L_j\sigma\in M$, otherwise if possible $L_j\sigma$ unassigned,
otherwise $M = M'_{1} \comp(L_j\sigma) M''_{1}$ and for all literals $L\sigma \in C\sigma$ holds $\comp(L\sigma) \not\in M''_{1}$ \label{createInstance:firstRightmost}
\item if $\comp(L_2\sigma) \not\in M$ then $L_k = L_2$ else $L_k\sigma \in C\sigma$
and if possible $L_k\sigma \in M$, otherwise if possible $L_k\sigma$ unassigned,
and otherwise $M = M'_{2} \comp(L_k\sigma) M''_{2}$ and for all literals $L\sigma \in C\sigma \setminus \{ L_j\sigma \}$ it holds that $\comp(L\sigma) \not\in M''_{2}$ \label{createInstance:secondRightmost}
\item there exist no $K_1\sigma, K_2\sigma \in C\sigma$ with $(C\sigma; K_1\sigma; K_2\sigma) \in O$.
\end{enumerate*}
}{TWFO}{20}

\medskip\noindent
When an instance of a non-ground watched literal is false w.r.t. $M$, CreateInstance adds the respective instance to $O$ and picks appropriate watched literals.

\medskip
\shortrules{UpdateWatched}{$(M; O \disjun \{ (C; L_1; L_2) \}; F; \top)$ \\ \hspace*{4.86em}}{$(M; O \cup \{ (C; L_j; L_2)\}; F; \top)$}
{provided $\comp(L_1) \in M$, and
$L_2\not\in M$, and
fresh $L_j\in C$ where $L_j\in M$ or if no such $L_j$ exists $L_j$ is unassigned.}{TWFO}{20}

\medskip\noindent
UpdateWatched exchanges a false watched literal as long as the other watched is not true. It is replaced by a true literal or, if such
a literal does not exist, by an unassigned literal. For newly created instances, this is already handled by CreateInstance, but
for existing instances, UpdateWatched takes care of watched literals.
For example, consider the watched clause $(P(x)\lor Q(y)\lor R(z);P(x);Q(y))\in O$ with $M=\neg P(a)$.
UpdateWatched is not applicable to this clause, instead, CreateInstance has to be applied to add the instance $(P(a)\lor Q(y)\lor R(z);R(z);Q(y))$ to $O$.
If the watched clause was $(P(a)\lor Q(y)\lor R(z);P(a);Q(y))$ instead, then CreateInstance would not be applicable because this would create a duplicate instance, and UpdateWatched would have to be applied.

\medskip
\shortrules{FactorizeWatched}{$(M; O \disjun \{ (C; L_1; L_2) \}; F; \top)$ \\ \hspace*{4.86em}}{$(M; O \cup \{ (C; L_1; L_2), (C\sigma; L_j\sigma; L_k\sigma)  \}; F; \top)$}
{provided all of the following:
\begin{enumerate*}[label=(\roman*), itemjoin={,\quad}]
\item $L_1\sigma = L_2\sigma$ for mgu $\sigma$
\item fresh $L_j\sigma\in C\sigma$
\item $L_j\sigma \not= L_k\sigma$
\item if possible $L_j\sigma\in M$, otherwise if possible $L_j\sigma$ unassigned,
otherwise $M = M'_{1} \comp(L_j\sigma) M''_{1}$ and for all literals $L\sigma \in C\sigma$ holds $\comp(L\sigma) \not\in M''_{1}$ \label{factorizeWatched:firstRightmost}

\item if $\comp(L_2\sigma) \not\in M$ then $L_k = L_2$ else $L_k\sigma \in C\sigma$
and if possible $L_k\sigma \in M$, otherwise if possible $L_k\sigma$ unassigned,
and otherwise $M = M'_{2} \comp(L_k\sigma) M''_{2}$ and for all literals $L\sigma \in C\sigma \setminus \{ L_j\sigma \}$ it holds that $\comp(L\sigma) \not\in M''_{2}$ \label{factorizeWatched:secondRightmost}

\item there exist no $K_1\sigma, K_2\sigma \in C\sigma$ with $(C\sigma; K_1\sigma; K_2\sigma) \in O$.
\end{enumerate*}
}{TWFO}{20}

\noindent
If the two watched literals are unifiable, FactorizeWatched adds this instance to $O$ and picks appropriate watched literals.

\medskip
\shortrules{DetectPropLiteral}{$(M; O \disjun \{ (C; L_1; L_2) \}; F; \top)$ \\ \hspace*{4.86em}}{$(M; O \cup \{ (C; L_1; L_2) \}; F \cup \{ L_2^{C;K} \}; \top)$}
{provided all of the following:
\begin{enumerate*}[label=(\roman*), itemjoin={,\quad}]
\item $\comp(L_1) \in M$
\item $C = C_0 \lor L_2 \lor \ldots \lor L_2$
\item $L_2$ unassigned
\item $M \models \neg C_0$
\item $M = M' K M''$ and $\comp(K) \in C$, and for all $L \in C$ holds $\comp(L) \not\in M''$
\item there is no $H^{D;K'} \in F$ such that $H\tau = L_2$ for some substitution $\tau$ and $K'\in M' K$.
\end{enumerate*}
}{TWFO}{20}

\medskip\noindent
DetectPropLiteral adds newly found propagations to $F$: It adds the propagation $L_2$ to $F$ if $L_2$ is not an instance of another propagation already contained in $F$.
The rule implicitly factorizes on the unassigned literal.
Recall that we annotate the found propagatable literal with the last literal on the trail that falsifies a literal in the clause. Eventually, this will allow us to remove propagations when literals from $M$ are removed.

\medskip
\shortrules{Conflict}{$(M; O \disjun \{ (C; L_1; L_2) \}; F; \top)$ \\ \hspace*{4.86em}}{$(M; O \cup \{ (C; L_1; L_2) \}; F; C)$}
{provided $\comp(L_1) \in M$ and $\comp(L_2) \in M$, and $M\models \neg C$.}{TWFO}{20}

\medskip\noindent
Finally, Conflict detects if the given trail $M$ falsifies a clause instance.

\subsubsection{Properties} For the above internal rules, we prove termination, soundness and completeness. Note that the proofs are independent of the way $M$ is created.

\begin{restatable}[Termination]{lemma}{lemmaTermination}
  \label{lem:termination}
  Exhaustive application of the internal rules CreateInstance, UpdateWatched, FactorizeWatched, DetectPropLiteral, and
  Conflict from any start state terminates.
\end{restatable}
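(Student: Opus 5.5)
Since the internal rules never modify $M$, the trail stays fixed during exhaustive application, and termination follows from a finiteness argument over a lexicographic measure. The plan is to bound the number of applications of each rule separately, starting with the rules that add new objects (instances in $O$, literals in $F$, the conflict $D$). After that, the remaining rule, UpdateWatched, is handled by a local potential on each instance.

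\emph{Step 1: the set of clause instances is finite.} CreateInstance applies a minimal grounding substitution $\sigma$ of $L_1$ with $\comp(L_1\sigma)\in M$. Hence $\cdom(\sigma)$ consists of subterms of the finitely many literals in $M$. FactorizeWatched applies an idempotent mgu that introduces no fresh variables, so it strictly decreases the number of variables of the instance: if $L_1\neq L_2$ are unifiable, the mgu is not the identity. Every instance in $O$ is therefore obtained from some $C\in N$ by a finite chain of such substitutions, each of which either eliminates at least one variable or grounds the remaining variables with terms built from subterms of $M$ and of $N$. An induction on the number of variables shows that the reachable instances $C\sigma$ of the finitely many clauses of $N$ form a finite set $\mathcal{I}$. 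Condition (vi) of both rules forbids adding an instance whose clause $C\sigma$ already occurs in $O$, so each of these two rules fires at most $|\mathcal{I}|$ times.

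\emph{Step 2: DetectPropLiteral and Conflict.} Conflict sets $D\neq\top$, after which no rule applies, since every rule requires $D=\top$; it fires at most once. DetectPropLiteral adds $L_2^{C;K}$ with $L_2$ a literal of some instance in the finite set $\mathcal{I}$ and $K\in M\cup\{\top\}$. Condition (vi) forbids re-adding a literal that is an instance of an existing entry with a suitable annotation, so the same annotated literal cannot be added twice. $F$ therefore grows at most $|\mathcal{I}|\cdot\max|C|\cdot(|M|+1)$ times.

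\emph{Step 3: UpdateWatched, the main obstacle.} This rule only replaces the watch pair of an existing clause instance, and since $O$ is a set it could in principle cycle. I would assign each tuple $(C;L_1;L_2)$ the multiset of statuses of its watched literals, ordered true $<$ unassigned $<$ false. An application replaces a false watched literal $L_1$ by a fresh $L_j$ that is true, or unassigned if no true fresh literal exists, so this measure strictly decreases. It must also be checked that the old tuple is effectively replaced, as the rule is written with $\disjun$ on the left and plain $\cup$ of the new tuple on the right. Since $M$ is fixed, statuses never change, so for each of the finitely many clauses $C$ the number of UpdateWatched steps is bounded by a constant such as $2\cdot 2$. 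New clauses arriving through Steps 1 and 2 are finitely many, so the total is bounded. The delicate point is that watched pairs are stored per tuple; I would argue that $O$ contains exactly one tuple per clause instance, which is an invariant preserved by all rules via condition (vi) and the replacement semantics.

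Combining the steps, the lexicographic measure (number of instances not yet created, number of propagations not yet added, conflict not yet found, sum over tuples of the watched-status measure) decreases with every rule application. This ordering is well-founded, so exhaustive application terminates.
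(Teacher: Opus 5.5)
Your argument is correct and has the same skeleton as the paper's proof. $M$ is fixed, the instance-creating rules CreateInstance and FactorizeWatched are bounded first, and the remaining rules are then shown to fire only finitely often. The difference is in how each half is justified.

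\begin{itemize}
\item \emph{Paper's route.} It counts directly, bounding CreateInstance by $|N|\cdot(|M|+1)^n$ and FactorizeWatched by $(|N|\cdot(|M|+1)^n+|N|)\cdot 2^n$ via a count of factors. It then only asserts that the other rules apply finitely often once $O$ stops changing.
\item \emph{Your route.} You show that every instance-creating step strictly lowers the number of variables and has finitely many possible outcomes, so the closure $\mathcal{I}$ is finite. You then give the reasons the paper omits:
  \begin{itemize}
  \item every internal rule requires $D=\top$, so Conflict fires at most once;
  \item condition (vi) of DetectPropLiteral forbids re-adding an annotated literal;
  \item UpdateWatched lowers a status potential, because $M$ is fixed.
  \end{itemize}
  You combine these into a lexicographic measure.
\end{itemize}

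The paper's route gives explicit, if coarse, numeric bounds. Yours gives only finiteness. In exchange it is self-contained, and it does not depend on how CreateInstance and FactorizeWatched interleave on each other's outputs. It also actually proves the final step that the paper leaves implicit.

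Two small points to tighten:
\begin{itemize}
\item CreateInstance grounds only the variables of $L_1$, not ``the remaining variables''. What you need is that $L_1$ must be non-ground. Otherwise $\sigma$ is empty, $C\sigma=C$ is already in $O$, and condition (vi) blocks the rule. So CreateInstance also strictly reduces the variable count.
\item The same observation, using that the mgu of identical literals is the identity, rules out $L_1=L_2$ in FactorizeWatched. That is what justifies your claim that its mgu is non-trivial.
\item The one-tuple-per-clause invariant is not needed. UpdateWatched replaces one tuple by one tuple, possibly merging it with an existing one, so your summed potential strictly decreases either way.
\end{itemize}
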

\begin{proof}
  Let $(M; O; F; D)$ be the start state. None of the rules changes $M$ and none of the rules changes
  $N = \{C\mid (C,L_1,L_2)^\top\in O \}$.
  The only rules that manipulate $O$ are CreateInstance and FactorizeWatched. Let $n$ be the maximal length of a clause in $N$.
  Then CreateInstance can at most be applied $|N| \cdot (|M|+1)^n$  times starting from $(M; O; F; D)$ by any application order of the rules. Recall CreateInstance
  only introduces new instances. For some clause $C$ there are worst case $2^{|C|}$ different factors.
  FactorizeWatched can at most be applied $(|N| \cdot (|M|+1)^n + |N|) \cdot 2^n$ times starting from $(M; O; F; D)$ by any application order of the rules.
  Recall FactorizeWatched only introduces new instances. Finally, the rules UpdateWatched, FactorizeWatched, DetectPropLiteral, and
  Conflict can only be applied finitely often once the set $O$ does not change anymore.
\end{proof}

\noindent
\Cref{lem:invexhaust} states the most important invariants about the watched literals. Those invariants are always established if all of the above rules are applied exhaustively.
Overall, those invariants guarantee that the propagation and conflict information for a given clause instance can be entirely determined by just looking at the watched literals:
If and only if one watched literal is false, the clause instance can propagate, and if and only if both watched literals are false, the overall clause instance is false.

\begin{restatable}[Two-Watched Invariants]{lemma}{lemmaInvexhaust} \label{lem:invexhaust}
Let $(M; O; F; D)$ be a state reached from a start state during a run of the calculus.
If the rules CreateInstance, UpdateWatched and FactorizeWatched are not applicable to this state, then the following properties hold:

\begin{enumerate}
\item For every instance $(C; L_1; L_2) \in O$ where there are two literals $K_1, K_2 \in C$ with $K_1 \not= K_2$ that are unassigned
with respect to $M$ it holds that either $L_1$ or $L_2$ is true or $L_1$ and $L_2$ are unassigned with respect to $M$. \label{lem:invexhaust:exhaustbothunassign}

\item For every instance $(C; L_1; L_2) \in O$ with $C = C' \lor L_1 \lor \ldots \lor L_1 \lor L_2 \lor \ldots \lor L_2$ and $L_1, L_2 \not\in C'$
where $L_1$ and $L_2$ are not true and $L_1$ or $L_2$ is false with respect to $M$ it holds that $M \models \neg C'$.  \label{lem:invexhaust:exhauststillfalse}

\item For every instance $(C; L_1; L_2) \in O$ where $L_1$ and $L_2$ are false with respect to $M$ it holds that $M \models \neg C$.  \label{lem:invexhaust:exhaustbothfalse}
\end{enumerate}
\end{restatable}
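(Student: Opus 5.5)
The plan is to argue each of the three properties by contradiction. In each case, if the property failed for some instance $(C; L_1; L_2) \in O$, then UpdateWatched would be applicable to that instance, which contradicts the hypothesis. The argument mostly looks at UpdateWatched alone; CreateInstance and FactorizeWatched are needed only to settle which literals count as assigned. Since non-ground literals are always unassigned, a false watched literal is ground.

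\textbf{Property 1.} Suppose $C$ contains two distinct unassigned literals $K_1 \neq K_2$. Suppose also that neither watched literal is true and they are not both unassigned, so w.l.o.g.\ $L_1$ is false, i.e.\ $\comp(L_1)\in M$, and $L_2 \notin M$.

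1. At least one of $K_1, K_2$ differs from $L_2$. It also differs from $L_1$, since $L_1$ is assigned.
2. So there is a fresh literal in $C$ that is unassigned, or a fresh literal that is true.
3. Hence UpdateWatched applies to $(C; L_1; L_2)$, a contradiction.

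One subtlety is that UpdateWatched requires picking a true fresh literal if one exists. This only affects which successor state arises, not whether the rule is applicable.

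\textbf{Property 2.} Assume $L_1, L_2$ are not true and w.l.o.g.\ $L_1$ is false. Suppose some $L \in C'$ is not false in $M$.

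1. By definition of $C'$, $L$ differs from $L_1$ and $L_2$, so $L$ is fresh.
2. $L$ is either true or unassigned.
3. Together with $\comp(L_1)\in M$ and $L_2 \notin M$, UpdateWatched is applicable, again a contradiction.

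The case where $L_2$ is false instead is symmetric, since watched literals are interchangeable.

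\textbf{Property 3.} Both $L_1$ and $L_2$ are false.

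1. Consistency of $M$ shows that neither is true.
2. Property 2 then gives $M \models \neg C'$.
3. All remaining literals of $C$ are copies of $L_1$ or $L_2$, which are false.
4. Hence $M \models \neg C$.

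The main obstacle is the degenerate case $L_1 = L_2$. This occurs for unit clauses or clauses all of whose literals are equal, as permitted by the start state and by CreateInstance (iii). In the multiset decomposition, the copies must then be read as all occurrences of that literal, and for Property 1 the distinctness $K_1 \neq K_2$ still guarantees a fresh literal.

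I would also check that the instances in $O$ are such that the statement's uses of ``true/false'' refer to ground literals wherever assignment matters. This holds automatically, because an assigned literal is ground by definition. For that reason the reachability hypothesis is used only to ensure that $O$ consists of well-formed watched tuples with $L_1, L_2 \in C$, which is preserved by every rule.
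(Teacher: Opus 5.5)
Your proposal is correct and is essentially the paper's proof: each property is obtained by contradiction, because a fresh literal that is true or unassigned exists while one watched literal is false and the other is not true, so UpdateWatched would be applicable; Property~3 is then reduced to Property~2 much as the paper does. One thing to make explicit, which the paper does by writing the state as $(M; O; F; \top)$, is that the step ``then UpdateWatched is applicable'' needs the fourth component to be $\top$, since every internal rule requires it; for $D \neq \top$ the hypothesis holds vacuously and the properties can actually fail (e.g.\ if Conflict was applied before UpdateWatched was exhausted), so the lemma must be read for $D = \top$.
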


\begin{proof} \medskip\noindent \ref{lem:invexhaust:exhaustbothunassign}. Let $(M; O; F; \top)$ be a state reached during a run of the calculus
where CreateInstance, UpdateWatched and FactorizeWatched are exhaustively applied. Further, we assume there is an instance $(C; L_1; L_2) \in O$ with two literals $K_1, K_2 \in C$ and $K_1 \not= K_2$ that are unassigned with respect to $M$.
For the sake of a proof by contradiction, we assume that $L_1$ is false with respect to $M$ and $L_2$ not true with respect to $M$.
We know that $L_1, L_2 \in C$ and $L_1 \not= L_2$ because there are at least two different literals in $C$,
because $K_1$ and $K_2$ are different.
Because $L_1$ is false it holds that $L_1 \not= K_1$ and $L_1 \not= K_2$
and because $K_1 \not= K_2$ it holds that $L_2 \not= K_1$ or $L_2 \not= K_2$.
So either $K_1$ or $K_2$ is fresh in $C$ and $\comp(L_1) \in M$ because $L_1$ is false
and $L_2 \not\in M$ because $L_2$ is not true.
So all prerequisites of the rule UpdateWatched are fulfilled
and therefore UpdateWatched is applicable contradicting our assumption.

\medskip\noindent \ref{lem:invexhaust:exhauststillfalse}. Let $(M; O; F; \top)$ be a state reached during a run of the calculus
where CreateInstance, UpdateWatched and FactorizeWatched are exhaustively applied.
Further let $(C; L_1; L_2) \in O$ be an instance with $C = C' \lor L_1 \lor \ldots \lor L_1 \lor L_2 \lor \ldots \lor L_2$ and $L_1, L_2 \not\in C'$
where, without loss of generality, $L_1$ is false and $L_2$ is not true with respect to $M$.
Assume that $M \not\models \neg C'$.
So there is at least one literal $L \in C'$ that is not false, hence it is either true or unassigned.
So $\comp(L_1) \in M$ holds as well as $L_2 \not\in M$ and $L$ is fresh in $C$ and either true or unassigned.
So UpdateWatched would be applicable contradicting our assumption.
So $M \models \neg C'$ holds.

\medskip\noindent \ref{lem:invexhaust:exhaustbothfalse}. The proof is analogous to the proof of \ref{lem:invexhaust:exhauststillfalse}
because $C = C' \lor L_1 \lor \ldots \lor L_1 \lor L_2 \lor \ldots \lor L_2$ holds and therefore $M \models \neg C$ implies $M \models \neg C'$.
\end{proof}

\noindent
The following lemma will establish some invariants about the TWFO calculus that will be later used in proofs.

\begin{restatable}[Invariants]{lemma}{lemmaInvariants} \label{lem:inv}
For a state $(M; O; F; D)$ that is reached from the start state during a run of the calculus, the following invariants hold:
\begin{enumerate}
\item Every instance $(D; K_1; K_2) \in O$ is an instance of an initial or added clause instance $(C; L_1; L_2)^{\top} \in O$. \label{lem:inv:actualinstance}

\item For every instance $(C; L_1; L_2) \in O$ it holds that $L_1 \in C$ and $L_2 \in C$. \label{lem:inv:watchedinclause}

\item For a clause instance $(C; L_1; L_2) \in O$ it holds that either
all literals in $C$ are equal or $L_1 \not= L_2$. \label{lem:inv:watchedunequal}

\item For all literals $L^{C;\top} \in F$ holds that $C = L \lor \ldots \lor L$. \label{lem:inv:proptopunit}

\item For all literals $L^{C;K} \in F$ with $K \not= \top$ holds that $M = M' K M''$
and $C = C_0 \lor L \lor \ldots \lor L$ and $M' K \models \neg C_0$ and $M' \not\models \neg C_0$. \label{lem:inv:propkontrail}

\item For a clause instance $(C; L_1; L_2) \in O$ where $L_1$ is ground and
UpdateWatched is exhaustively applied
it holds that $C$ is true or $L_1$ is unassigned or the instance generates a
conflict or can propagate or the literal that would be propagated is an instance of a literal in $F$ or already true in $M$. \label{lem:inv:exhaustgroundfalseimpl}
\end{enumerate}

\end{restatable}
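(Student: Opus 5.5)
The plan is an induction on the length of the run from the start state. For each of the six properties I would first check that it holds in a valid start state. Then I would show that every internal rule preserves it, and also every steering rule (AddLiteral, RemoveLiteral, AddClause, to the extent they are part of the run). Properties \ref{lem:inv:actualinstance}--\ref{lem:inv:watchedunequal} concern only $O$. Properties \ref{lem:inv:proptopunit} and \ref{lem:inv:propkontrail} concern $F$ together with $M$. Property \ref{lem:inv:exhaustgroundfalseimpl} is a consequence of the exhaustion hypothesis and needs no induction.

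\textbf{Properties \ref{lem:inv:actualinstance}--\ref{lem:inv:watchedunequal} (start state).} In the start state, every element of $O$ is annotated $\top$. The watched literals are chosen in $C$, and they are distinct iff $|C|>1$, which gives all three properties.

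\textbf{Properties \ref{lem:inv:actualinstance}--\ref{lem:inv:watchedunequal} (rules).} Only CreateInstance, FactorizeWatched, UpdateWatched and AddClause modify $O$.
\begin{itemize}
\item CreateInstance and FactorizeWatched add $C\sigma$, where $(C;L_1;L_2)$ already satisfies \ref{lem:inv:actualinstance} by induction. Composing substitutions shows $C\sigma$ is an instance of the same $\top$-clause.
\item Conditions (ii)/(v) of these rules give $L_j\sigma, L_k\sigma\in C\sigma$. In the case $L_k=L_2$ we use $L_2\in C$ from the induction hypothesis.
\item Condition (iii) gives \ref{lem:inv:watchedunequal}. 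For FactorizeWatched, condition (iii) demands inequality unconditionally.
\item UpdateWatched replaces $L_1$ by a fresh $L_j\in C$. Freshness gives both $L_j\in C$ and $L_j\neq L_2$.
\end{itemize}

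\textbf{Properties \ref{lem:inv:proptopunit} and \ref{lem:inv:propkontrail}.} In the start state these are part of the assumption on $F$: units are annotated $\top$, and no non-propagatable literal is present. Only DetectPropLiteral, AddLiteral and RemoveLiteral touch them.
\begin{itemize}
\item DetectPropLiteral: condition (ii) gives the decomposition $C=C_0\lor L_2\lor\dots\lor L_2$. Condition (iv) gives $M\models\neg C_0$. Condition (v) says that no complement of a literal of $C$ lies in $M''$, so already $M'K\models\neg C_0$. We also need $M'\not\models\neg C_0$, i.e.\ $\comp(K)\in C_0$. The subtle point is that $K$ could complement $L_1$ rather than a literal of $C_0$. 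Since $L_2$ is unassigned, $\comp(K)\neq L_2$, so $\comp(K)\in C_0$, and because $M$ is consistent and $K$ is the rightmost such literal, $M'$ misses it.
\item AddLiteral only extends $M$ at the end, so $M=M'KM''$ remains true with a longer $M''$.
\item RemoveLiteral: I expect it to delete exactly the $F$-entries whose annotation $K$ is the removed literal. For those remaining, $K$ stays in the trail.
\end{itemize}
The $\top$ case of \ref{lem:inv:proptopunit} is the hardest to pin down. I would argue that annotation $\top$ arises only in the start state or from AddClause, in the fully factorizable unit case. Apart from the start state and AddClause, DetectPropLiteral always annotates with a trail literal.

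\textbf{Property \ref{lem:inv:exhaustgroundfalseimpl}.} Assume $L_1$ is ground and false, $C$ is not true, and UpdateWatched is not applicable.
\begin{itemize}
\item If $\comp(L_2)\in M$, the argument of \Cref{lem:invexhaust}(\ref{lem:invexhaust:exhaustbothfalse}) (which only uses non-applicability of UpdateWatched) gives $M\models\neg C$, i.e.\ a conflict.
\item If $L_2$ is true, then $C$ is true.
\item Otherwise $L_2$ is unassigned. Non-applicability of UpdateWatched forces every fresh literal to be false, so $C=C_0\lor L_2\lor\dots\lor L_2$ with $M\models\neg C_0$. Then either DetectPropLiteral's side condition (vi) fails, so $L_2$ is an instance of a literal in $F$, or the instance can propagate.
\end{itemize}
I expect this case analysis, together with the RemoveLiteral bookkeeping for \ref{lem:inv:propkontrail}, to be the main obstacle. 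Both depend on the exact form of the steering rules defined later, so I would state those preservation steps conditionally on their side conditions.
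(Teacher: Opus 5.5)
Your proposal is correct and follows essentially the same route as the paper. It proves properties 1--5 by induction over the run with a rule-by-rule check, using the same key step in DetectPropLiteral ($L_2$ unassigned forces $\comp(K)\neq L_2$, hence $\comp(K)\in C_0$) and the same RemoveLiteral bookkeeping, and it proves property 6 by the same direct case analysis from the non-applicability of UpdateWatched; the only small omission is that AddClause also modifies $F$, though it adds only $\top$-annotated entries, so property 5 is trivially preserved.
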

\begin{proof}
\noindent \ref{lem:inv:actualinstance}. We prove this by induction.
For the start state the invariant holds
because all instances are just the initial clauses.
The induction hypothesis is that the invariant holds for a state $(M; O; F; D')$.
For every instance $(D; K_1; K_2) \in O$ there is no rule that changes the clause instance $D$.
This means that if an instance $(D; K_1; K_2) \in O$ is an instance of an initial or added clause $C$ when it is created
then it will always be an instance of $C$, as $C$ cannot be removed.
So left to show is that every rule that creates an instance creates an actual instance of an initial or added clause.
CreateInstance and FactorizeWatched are the only rules that create instances.
Both create an instance $(D\sigma; K_j\sigma; K_k\sigma)$ from an instance $(D; K_1; K_2) \in O$
where $\sigma$ is non-empty.
So $D\sigma$ is an instance of $D$, therefore $D\sigma$ is an instance of an initial or added clause $C$ if $D$ is, by transitivity of instantiation.
By induction follows that for all $(D; K_1; K_2) \in O$ holds that $D$ is an actual instance of an
initial or added clause $(C; L_1; L_2)^{\top} \in O$.

\medskip\noindent \ref{lem:inv:watchedinclause}. We prove this by induction.
The induction basis is the start state.
For the start state the invariant holds
because for all watched literals $L_1, L_2 \in C$ is required.
The induction hypothesis is that the invariant holds for a state $(M; O; F; D)$.
The rules CreateInstance, UpdateWatched, FactorizeWatched and AddClause are the only rules that change
the watched literals or add a new instance.
All these rules either do not change a watched literal $L_1$ or
they require for the newly watched literal $L_j$ that $L_j \in C$ holds.
So the invariant still holds after rule application.

\medskip\noindent \ref{lem:inv:watchedunequal}. We prove this by induction.
The induction basis is the start state.
For the start state the invariant holds
because $L_1 = L_2$ is only allowed for unit clauses $C$.
The induction hypothesis is that the invariant holds for a state $(M; O; F; D)$.
The rules CreateInstance, UpdateWatched, FactorizeWatched and AddClause are the only rules that change
the watched literals or add a new instance.
CreateInstance, FactorizeWatched and AddClause require $L_1 \not= L_2$ if possible.
So after application of these rules the invariant still holds.
UpdateWatched picks a fresh $L_j \in C$ and therefore $L_j \not= L_2$.
So after application of UpdateWatched the invariant still holds.

\medskip\noindent \ref{lem:inv:proptopunit}. We prove this by induction.
The induction basis is the start state.
For the start state the invariant holds because for all $L^C \in F$ holds
that they are annotated with $\top$ and that $C = L \lor \ldots \lor L$ is a unit.
The induction hypothesis is that the invariant holds for a state $(M; O; F; D)$.
DetectPropLiteral and AddClause are the only rules that add literals to $F$
and no rule changes literals or annotations in $F$.
DetectPropLiteral only adds $L^{C;K}$ to $F$ where $K \not= \top$.
So the invariant still holds after DetectPropLiteral was applied.
AddClause can add $L\sigma^{C\sigma;\top}$ to $F$ where $C\sigma = L\sigma \lor \ldots \lor L\sigma$.
So the invariant still holds after AddClause($C$) was applied.

\medskip\noindent \ref{lem:inv:propkontrail}. We prove this by induction.
The induction basis is the start state.
For the start state the invariant holds because for all $L^{C;K} \in F$ holds $K = \top$.
The induction hypothesis is that the invariant holds for a state $(M; O; F; D)$.
DetectPropLiteral, RemoveLiteral and AddClause are the only relevant rules that we have to consider
because all other rules just extend the trail or do not change $M$ and $F$.
AddClause can only add literals $L^{C;\top}$. So the invariant still holds after AddClause was applied.
DetectPropLiteral adds $L^{C;K}$ to $F$ where $C = C_0 \lor L \lor \ldots \lor L$ and $L$ is unassigned and $M \models \neg C_0$ hold.
Also $M = M' K M''$ and $\comp(K) \in C$ and for all $L' \in C$ holds that $\comp(L) \not\in M''$.
Because $M \models \neg C_0$ and for all $L' \in C$ it holds that $\comp(L) \not\in M''$ and it holds that $M' K \models \neg C_0$.
We know that $\comp(K) \not= L$ holds because $L$ is unassigned and therefore $\comp(K) \in C_0$.
So because $\comp(K) \in C_0$ and $M' K \models \neg C_0$ it also holds that $M' \not\models \neg C_0$.
So after DetectPropLiteral was applied the invariant still holds.
RemoveLiteral removes the last literal $L'$ from the trail and removes all literals from $F$
that are annotated with $L'$.
We have to show that all literals that remain in $F$ still fulfill the invariant for the shorter trail.
Let $M = M'_1 L'$. We know that for all $L^{C;K} \in F$ with $K \not= \top$ holds that $M = M' K M''$
and $C = C_0 \lor L \lor \ldots \lor L$ and $M' K \models \neg C_0$ and $M' \not\models \neg C_0$.
Only if $K \not= L'$ the literal $L$ is kept in $F$
and we know that $M = M'_1 L' = M' K M''' L'$ and therefore $M'_1 = M' K M'''$.
So for all literals still in $F$ after the application of RemoveLiteral the invariant still holds
and therefore the invariant still holds after application of RemoveLiteral.

\medskip\noindent \ref{lem:inv:exhaustgroundfalseimpl}. Let $(C; L_1; L_2) \in O$
be a clause instance where $L_1$ is ground and UpdateWatched is exhaustively applied.
Either $L_1$ is true with respect to $M$ and then $C$ is also true or $L_1$ is unassigned
or $L_1$ is false with respect to $M$.
Because UpdateWatched is exhaustively applied we know that if $L_1$ is false
then $L_2$ is not true and there is no other true or unassigned literal.
So all literals that are not watched are false.
If $L_2$ is false then Conflict is applicable.
If $L_2$ is unassigned then all prerequisites for the DetectPropLiteral rule are met except
that there is no $H \in F$ such that $H\tau = L_2$ for some $\tau$
and this would mean that $L_2$ is an instance of a literal in $F$.
So the invariant holds.
\end{proof}

\noindent
Last, \Cref{lem:prop} states soundness and completeness. It shows that, after exhaustive rule applications, the calculus correctly computes all possible propagations and correctly determines if $N$ is conflicting to the trail $M$.

\begin{restatable}[Soundness and Completeness]{lemma}{lemmaProp} \label{lem:prop}
Let $(M; O; F; \top)$ be a state reached from the start state during a run of the calculus.
If the rules CreateInstance, UpdateWatched and FactorizeWatched are not applicable to this state, then the following properties hold:
\begin{enumerate}
\item If there is an instance $(C; L_1; L_2)^{\top} \in O$ with ground instance $C\sigma$
such that $M \models \neg C\sigma$ then Conflict is applicable for $C\sigma$. \label{lem:prop:complconflict}

\item If there is an instance $(C; L_1; L_2)^{\top} \in O$ with $C\sigma = C'\sigma \lor L\sigma \lor \ldots \lor L\sigma$ where $C'\sigma$ is ground and $\sigma$ is minimal
and $M \models \neg C'\sigma$ and $L\sigma$ is unassigned with respect to $M$ and not an instance of a literal in $F$ then DetectPropLiteral is applicable for $L\sigma$. \label{lem:prop:complpropagate}

\item If there is a transition $(M; O; F; \top) \Rightarrow_{\text{TWFO}}^{\text{Conflict}} (M; O; F; D)$ then
there is a conflict $M \models \neg D$ where $D$ is a ground instance of a clause instance $(C; L_1; L_2)^{\top} \in O$. \label{lem:prop:soundconflict}

\item If there is a transition $(M; O; F; \top) \Rightarrow_{\text{TWFO}}^{\text{DetectPropLiteral}} (M; O; F \cup \{ L^D \}; \top)$ then
there is a propagation where $D = D' \lor L \lor \ldots \lor L$ and $M \models \neg D'$ and $D$ is an instance of a clause instance $(C; L_1; L_2)^{\top} \in O$
and $D'$ is ground. \label{lem:prop:soundPropLit}
\end{enumerate}
\end{restatable}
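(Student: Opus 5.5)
The two soundness items (\ref{lem:prop:soundconflict} and \ref{lem:prop:soundPropLit}) are direct, so we treat them first. For \ref{lem:prop:soundconflict}, the Conflict rule is applied to some $(D; K_1; K_2) \in O$ with $\comp(K_1), \comp(K_2) \in M$ and $M \models \neg D$. Since $M$ consists of ground literals, every literal of $D$ is the complement of a trail literal, so $D$ is ground. By \Cref{lem:inv}\ref{lem:inv:actualinstance}, $D$ is an instance of some $(C; L_1; L_2)^{\top} \in O$.

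Item \ref{lem:prop:soundPropLit} goes the same way. The side conditions of DetectPropLiteral give $D = D_0 \lor L \lor \ldots \lor L$ with $M \models \neg D_0$, so $D_0$ is ground. Again \Cref{lem:inv}\ref{lem:inv:actualinstance} shows that $D$ is an instance of an annotated clause.

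The completeness items are the real work. The plan is to prove a key claim: for every $(C; L_1; L_2)^{\top} \in O$ and every $\sigma$ as in \ref{lem:prop:complconflict} or \ref{lem:prop:complpropagate}, the instance $C\sigma$ itself occurs in $O$, with watched literals, once CreateInstance and FactorizeWatched are no longer applicable. We prove this by induction on the number of variables of $C$ bound by $\sigma$, or more precisely on the multiset size of $\vars$, walking down a chain of instances
\[ C = C_0,\; C_1 = C_0\sigma_1,\; C_2 = C_1\sigma_2,\; \ldots \]
each of which is a generalization of $C\sigma$.

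At a step with $C_i$ still more general than $C\sigma$, suppose $(C_i; K_1; K_2) \in O$. We split into two cases.
\begin{itemize}
\item If a watched literal $K_1$ is non-ground and $K_1\sigma$ is false in $M$, then CreateInstance with the minimal grounding $\sigma|_{\vars(K_1)}$ was applicable. By non-applicability and condition (vi), the instance $C_i\sigma|_{\vars(K_1)}$ is therefore already in $O$, and it is strictly less general.
\item If the relevant literal is not watched, we use that UpdateWatched is not applicable. By \Cref{lem:invexhaust}, a non-ground watched literal is unassigned, so the watched literals must be chosen among literals that are unassigned in the abstract instance. In the conflict case every literal of $C\sigma$ is false, hence the watched literals $K_1, K_2$ are non-ground or false, and their $\sigma$-instances are false. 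A non-ground watched literal then yields a CreateInstance step. In the propagation case, if both watched literals instantiate to $L\sigma$ while not being identical, they are unifiable and FactorizeWatched applies with an mgu more general than $\sigma$.
\end{itemize}
Once the chain reaches $C\sigma$, ground except possibly for $L\sigma$, we apply \Cref{lem:invexhaust}. In the conflict case both watched literals are false, so Conflict applies. In the propagation case one watched literal is false and the other is $L\sigma$ and unassigned, and \Cref{lem:invexhaust}\ref{lem:invexhaust:exhauststillfalse} gives $M \models \neg C'\sigma$. The remaining DetectPropLiteral conditions follow: condition (v) holds by choosing $K$ as the rightmost falsifying trail literal, and condition (vi) holds because $L\sigma$ is not an instance of a literal in $F$.

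The main obstacle is the induction step in which neither watched literal of the intermediate instance is false under $\sigma$. In that situation we must show that watched literals that instantiate to true or unassigned literals of $C\sigma$ cannot block progress. The subtle sub-case is the propagation case where the watched literals map to $L\sigma$ and to a false literal. There CreateInstance on the false one applies. If both map to $L\sigma$, FactorizeWatched applies. Some care is needed with the ``if possible'' rightmost choices in CreateInstance, conditions (iv) and (v), to ensure the chosen watched literals in the new instance are again consistent with $\sigma$. We handle this by arguing that any unassigned non-ground watched literal is a generalization of a literal of $C\sigma$, which is itself false or equal to $L\sigma$.
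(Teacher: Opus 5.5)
Your proposal is correct and is essentially the paper's proof. The soundness items go through Lemma~\ref{lem:inv}.\ref{lem:inv:actualinstance} plus the groundness forced by $M\models\neg D$, and your descending chain is just the paper's choice of a variable-minimal generalization of $C\sigma$ in $O$, with each step driven by non-applicability of CreateInstance and FactorizeWatched (via their no-duplicate condition) and of UpdateWatched; one correction is that the terminal propagation configuration (one ground false watched literal, the other $L\sigma$) follows from non-applicability of UpdateWatched together with Lemma~\ref{lem:inv}.\ref{lem:inv:watchedunequal} and $C'\sigma\neq\emptyset$, not from Lemma~\ref{lem:invexhaust}, whose item~\ref{lem:invexhaust:exhauststillfalse} merely returns the hypothesis $M\models\neg C'\sigma$.
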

\begin{proof}
\noindent \ref{lem:prop:complconflict}. Let $(M; O; F; \top)$ be a state of an application of the calculus reached during a run
and $(C; K_1; K_2)^{\top} \in O$ an initial clause where there is a ground instance $C\sigma$ such that $M \models \neg C\sigma$.
Further let the rules CreateInstance, UpdateWatched and FactorizeWatched be exhaustively applied.
So there is an instance $(C\sigma_1; L_1\sigma_1; L_2\sigma_1) \in O$ such that there is a possibly empty $\sigma'_1$
with $\sigma_1\sigma'_1 = \sigma$. There is such an instance because the instance $(C; K_1; K_2)^{\top}$
itself is such an instance where $\sigma_1$ is empty.
Because there could be multiple of such instances we assume that $C\sigma_1$ is minimal
with respect to the number of variables.
So there is no $(C\sigma_1\sigma_2; L_1\sigma_1\sigma_2; L_2\sigma_1\sigma_2) \in O$ where $\sigma_2$ is not empty
and there is a $\sigma'_2$ such that $\sigma_1\sigma_2\sigma'_2 = \sigma$.

If $L_1\sigma_1$ is non-ground we know that $L_1\sigma_1\sigma'_1 = L_1\sigma \in C\sigma$ is ground
and false with respect to $M$ and therefore either CreateInstance would be applicable or $C\sigma_1$ is not minimal.
But CreateInstance is exhaustively applied so $C\sigma_1$ is not minimal but this contradicts our assumption.
Using the same argument, it follows that $L_2\sigma_1$ is ground.
This implies that $L_1\sigma_1=L_1\sigma\in C\sigma$, and therefore $L_1\sigma_1$ is false with respect to $M$.
The same holds for $L_2\sigma_1$.
Since UpdateWatched is not applicable, all literals in $C\sigma_1$ must be false.
It follows that $C\sigma_1$ is ground, so in particular, $\sigma_1=\sigma$ and Conflict is applicable with the claimed instance.

\medskip\noindent \ref{lem:prop:complpropagate}. Let $(M; O; F; \top)$ be a state of an application of the calculus reached during a run
and $(C; K_1; K_2)^{\top} \in O$ where there is an instance $C\sigma = C'\sigma \lor L^{1}\sigma \lor \ldots \lor L^{k}\sigma$
where $C'\sigma$ is a non-empty ground instance. Further $M \models \neg C'\sigma$ and $L^{1}\sigma = \ldots = L^{k}\sigma$
and $L^{1}\sigma$ is unassigned with respect to $M$ and not an instance of a literal in $F$.
Further let the rules CreateInstance, UpdateWatched and FactorizeWatched be exhaustively applied.

Now as in part~\ref{lem:prop:complconflict} of the proof we again assume that there is a minimal instance
$(C\sigma_1; L_1\sigma_1; L_2\sigma_1) \in O$ such that there is a possibly empty $\sigma'_1$ with $\sigma_1\sigma'_1 = \sigma$.
Now we have to do a case analysis on which literals are watched.
If both watched literals are in $C'\sigma_1$ we know that if one is ground UpdateWatched would be applicable
because $L^{1}\sigma_1$ is unassigned.
If one is non-ground then there is a false ground instance because $M \models \neg C'\sigma$ holds
and then either CreateInstance would be applicable or $C\sigma_1$ is not a minimal instance.

If both watched literals $L_1\sigma_1$ and $L_2\sigma_1$ are in $L^{1}\sigma_1 \lor \ldots \lor L^{k}\sigma_1$ then
$L_1\sigma_1 \not= L_2\sigma_1$ has to hold as we know from \Cref{lem:inv}.\ref{lem:inv:watchedunequal}.
But then $L_1\sigma_1$ and $L_2\sigma_1$ are unifiable because $L^{1}\sigma = \ldots = L^{k}\sigma$ and $\sigma_1\sigma'_1 = \sigma$ hold.
So FactorizeWatched would be applicable or $C\sigma_1$ is not a minimal instance.

The last case is that one watched literal is in $C'\sigma_1$ and one in $L^{1}\sigma_1 \lor \ldots \lor L^{k}\sigma_1$.
Without loss of generality, we assume that $L_1\sigma_1 \in C'\sigma_1$ and $L_2\sigma_1 = L^{1}\sigma_1$.
So we know that either $L_1\sigma_1$ is ground and therefore it is false.
Then either UpdateWatched would be applicable or DetectPropLiteral is applicable for $L^{1}\sigma_1$
because $L^{1}\sigma_1$ is not an instance of a literal in $F$.
If $L_1\sigma_1$ is non-ground we know that $L_1\sigma_1\sigma'_1$ is false because $M \models \neg C'\sigma$ and $\sigma = \sigma_1\sigma'_1$ hold.
So CreateInstance would be applicable.

So from the above we get that if the rules CreateInstance, UpdateWatched and FactorizeWatched are exhaustively applied
then the only option left is that DetectPropLiteral is applicable for $L^{1}\sigma$.

\medskip\noindent \ref{lem:prop:soundconflict}. Let $(M; O; F; \top) \Rightarrow_{\text{TWFO}}^{\text{Conflict}} (M; O; F; D)$ be a transition.
This means there is an instance $(D; K_1; K_2) \in O$.
From Lemma~\ref{lem:inv}.\ref{lem:inv:actualinstance} we know that $D$ is an actual instance of an instance $(C; L_1; L_2)^{\top} \in O$.
It is also required in the Conflict rule that $M \models \neg D$ and
because $M$ does only contain ground literals $D$ is ground.
So the property holds.

\medskip\noindent \ref{lem:prop:soundPropLit}. Let $(M; O; F; \top) \Rightarrow_{\text{TWFO}}^{\text{DetectPropLiteral}} (M; O; F \cup \{ L^D \}; \top)$ be a transition.
Then we know from the prerequisites of DetectPropLiteral that there is an instance $(D; L_1; L_2) \in O$
where $D = D_0 \lor L_2 \lor \ldots \lor L_2$ and $M \models \neg D_0\sigma$ and $L = L_2$.
From Lemma~\ref{lem:inv}.\ref{lem:inv:actualinstance} we know that $D$ is an actual instance of an instance $(C; K_1; K_2)^{\top} \in O$.
From $M \models \neg D_0$ we know that $D_0$ is ground because $M$ only contains ground literals.
So the property holds.
\end{proof}

\medskip

\noindent
Up until this point, if we apply all of the above internal rules exhaustively from any starting state, we can determine all conflicting and propagatable literals in $N$ under $M$. In the following part, we now dynamically change $M$ and $N$. In this dynamic setting, the two-watched invariants (\Cref{lem:invexhaust}) will allow us to efficiently detect false or propagating clauses.

\subsubsection{Incremental Updates}
The following three \emph{steering rules} allow modifying the trail $M$, or add clauses to the considered clause set $N$.

\medskip
\shortrules{AddLiteral($L$)}{$(M; O; F; \top)$}{$(M L; O; F; \top)$}
{provided $L$ is ground and unassigned in $M$.
}{TWFO}{20}

\medskip
\shortrules{RemoveLiteral}{$(M L; O; F; D)$}{$(M; O; F'; D)$}
{provided $F' \subseteq F$,
for all $L'^{C;K} \in F'$ it holds that $L \not= K$,
and for all $L'^{C;K} \in F \setminus F'$ it holds that $L = K$, and
$D \not= \top$.}{TWFO}{20}

\medskip
\shortrules{AddClause($C$)}{$(M; O; F; D)$}{$(M; O \cup \{ (C; L_1; L_2)^{\top} \}; F'; \top)$}
{provided all of the following:
\begin{enumerate*}[label=(\roman*), itemjoin={,\quad}]
\item $D \not= \top$
\item there is no $\tau$ and $M = M' M''$ where $M''$ is non-empty such that $C\tau$ can already propagate with respect to $M'$
\item $L_1, L_2 \in C$
\item $L_1 \not= L_2$ if $C$ contains different literals
\item $L_1$ is unassigned \label{crit:addclause:unass}
\item if possible $L_2$ unassigned, otherwise $M = M' \comp(L_2) M''$ and for all literals $L \in C$ holds $\comp(L) \not\in M''$ \label{addClause:secondRightmost}
\item if there is a minimal substitution $\tau$ such that $C\tau = L\tau \lor \ldots \lor L\tau$
then $F' = F \cup \{ L\tau^{C\tau;\top} \}$ else $F' = F$
\item $M \not\models \neg D$. \label{crit:addclause:confresolved}
\end{enumerate*}}{TWFO}{20}

\medskip\noindent
Note that clauses can only be added after a Conflict was detected. This captures classical conflict-driven learning strategies. Also note that criterion \ref{crit:addclause:unass} requires the added clause to not be directly false under $M$, and criterion \ref{crit:addclause:confresolved} requires the previous conflict to no longer exist.

\medskip\noindent
To preserve the properties of \Cref{lem:prop,lem:invexhaust}, we need to exhaustively apply all internal rules before updating the trail or clause set. This is captured by the following definition:
\begin{restatable}[Reasonable Strategy]{definition}{definitionReasonable}
A strategy is called \emph{reasonable} if the rules CreateInstance, UpdateWatched, FactorizeWatched, DetectPropLiteral are preferred over Conflict, and all those internal rules are preferred over the steering rules AddLiteral($L$), RemoveLiteral, and AddClause($C$).
\end{restatable}
\noindent
The definition ensures that, in a reasonable run, right before any steering rule is applied, all conditions from \Cref{lem:prop,lem:invexhaust} hold. Thus, at such a moment in the derivation, all propagation and conflict information is captured and can be extracted from the state.
Next, we demonstrate the rules of the calculus in action by presenting an example run.
\begin{example}
Let $N = \{ (1) \, P(x) \lor \neg Q(x) \lor R(x, y), (2) \, P(x) \lor Q(a), (3) \, P(a) \lor \neg R(x, b) \}$
be the initial clause set. Furthermore, consider $(4) \, P(x) \lor P(a)$ as a clause added during the run.
To simplify notation, instead of writing down tuples $(C; L_1; L_2)$, we annotate watched literals with $^*$.
Let $O_0=\{(1)\,P(x)\lor\neg Q(x)^*\lor R(x, y)^*, (2) \, P(x)^* \lor Q(a)^*, (3) \, P(a)^* \lor \neg R(x, b)^* \}$ be our starting set of clause instances.
The following is a reasonable run of the two-watched literal scheme calculus:

{\renewcommand{\arraystretch}{1.5}
\raggedright
\begin{longtable}{l l}

& $(\varepsilon;
O_0;
\emptyset; \top)$ \\

$\twfoarrow{\text{AddLiteral}(\neg P(a))}$ &
$(\neg P(a);
O_0;
\emptyset; \top)$ \\

$\twfoarrow{\text{CreateInstance}(2)}$ &
$(\neg P(a);
O_1 := O_0 \cup \{(2.1)\,P(a)^* \lor Q(a)^* \};
\emptyset; \top)$ \\

$\twfoarrow{\text{DetectPropLit}(2.1)}$ &
$(\neg P(a);
O_1;
F_1:=\{ Q(a)^{(2.1); \neg P(a)} \}; \top)$ \\

$\twfoarrow{\text{DetectPropLit}(3)}$ &
$(\neg P(a);
O_1;
F_2:= F_1\cup\{ \neg R(x, b)^{(3); \neg P(a)} \}; \top)$ \\

$\twfoarrow{\text{AddLiteral}(Q(a))}$ &
$(\neg P(a) Q(a);
	O_1;F_2;\top)$\\

$\twfoarrow{\text{CreateInstance}(1)}$ &
$(\neg P(a) Q(a);
	O_2:=O_1\cup\{(1.1)\,P(a)\lor\neg Q(a)^*\lor R(a,y)^*\};$\\[-.5ex]
	&\hfill$F_2;\top)$\\[-.5ex]

$\twfoarrow{\text{DetectPropLit}(1.1)}$ &
$(\neg P(a) Q(a);
O_2; F_3:=F_2\cup\{R(a,y)^{(1.1); Q(a)} \}; \top)$ \\

$\twfoarrow{\text{AddLiteral}(R(a,b))}$ &
	$(\neg P(a) Q(a) R(a,b); O_2; F_3;\top)$\\

$\twfoarrow{\text{CreateInstance}(3)}$ &
	$(\neg P(a) Q(a) R(a,b); O_3:=O_2\cup\{(3.1)\,P(a)^*\lor\neg R(a,b)^*\};$\\[-.5ex]
	&\hfill$F_3;\top)$\\[-.5ex]

$\twfoarrow{\text{Conflict}(3.1)}$ &
$(\neg P(a) Q(a) R(a,b);
O_3;F_3; P(a) \lor \neg R(a, b))$ \\

$\twfoarrow{\text{RemoveLiteral}\times3}$ &
$(\varepsilon;
O_3;
\emptyset; P(a) \lor \neg R(a, b))$ \\

$\twfoarrow{\text{AddClause}(4)}$ &
$(\varepsilon;
O_3 \cup \{(4) P(a)^* \lor P(x)^*\};
\{ P(a)^{P(a) \lor P(x); \top} \}; \top)$ \\
\end{longtable}

}

\noindent
In the last step, the propagatable literal $P(a)$ is already detected by the implicit factoring in AddClause.
Further, note that although the literal $\neg P(a)$ falsifies an instance of the literal $P(x)$ in the clause $(1)$, this clause does not have to be touched after AddLiteral$(\neg P(a))$ because $P(x)$ is not watched, as we show in \Cref{lem:efficientApplication}. This allows to omit checks which leads to the efficiency of the two-watched literal scheme.

\end{example}

We now establish an invariant that holds when using incremental trail building from an empty starting state.
Intuitively, \Cref{lem:watchedInvariants} assures that watched, assigned literals are ``rightmost'' in the trail sequence $M$. This invariant is important if literals are removed from $M$. In such a case, the watched literal will become unassigned before any non-watched literals can change their status. Effectively, this condition is sufficient to keep all invariants from \Cref{lem:invexhaust} automatically after applying RemoveLiteral. Hence, it is never required to change watched literals in such a situation, which in turn allows omitting rule application checks (\Cref{lem:efficientApplication}). 

\begin{restatable}[Watched Literals are Rightmost]{lemma}{lemmaWatchedInvariants} \label{lem:watchedInvariants}
A state $(M; O; F; D)$ that is reached by a reasonable run from the empty starting state satisfies all of the following properties:
\begin{enumerate}
\item For every instance $(C; L_1; L_2) \in O$ where $C = C' \lor L_1 \lor \ldots L_1 \lor L_2 \lor \ldots \lor L_2$ and $L_1$ is false and $L_2$ is unassigned with respect to $M$ it holds that
$M = M' \comp(L_1) M''$ where for all literals $L \in C'$ holds $\comp(L) \not\in M''$. \label{def:fastback:onefalse}

\item For every instance $(C; L_1; L_2) \in O$ where $C = C' \lor L_1 \lor \ldots L_1 \lor L_2 \lor \ldots \lor L_2$ and $L_1$ and $L_2$ are false with respect to $M$ it holds that
$M = M' \comp(L_1) M'' \comp(L_2) M'''$ where for all literals $L \in C'$ holds $\comp(L) \not\in M'' M'''$. \label{def:fastback:bothfalse}

\item For every instance $(C; L_1; L_2) \in O$ where $C = C' \lor L_1 \lor \ldots L_1 \lor L_2 \lor \ldots \lor L_2$ and $L_1$ is false and $L_2$ is true with respect to $M$ it holds that
either $M = M'_{1} L_2 M''_{1} \comp(L_1) M'''_{1}$ or $M = M'_{2} \comp(L_1) M''_{2} L_2 M'''_{2}$
where for all literals $L \in C'$ holds $\comp(L) \in M'_{2}$. \label{def:fastback:falsetrue}
\end{enumerate}
\end{restatable}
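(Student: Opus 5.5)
I would prove all three properties simultaneously by induction on the length of the reasonable run from the empty start state, treating them as one combined invariant over all instances in $O$. In the base case $M=\varepsilon$, so no literal is assigned and every property holds vacuously. For the step, I distinguish by the rule applied. The internal rules DetectPropLiteral and Conflict change neither $M$ nor any watched pair, and AddLiteral/RemoveLiteral/AddClause leave existing watched pairs untouched. So the cases are: (a) rules that create a new instance or change watched literals (CreateInstance, FactorizeWatched, UpdateWatched, AddClause), and (b) rules that change $M$ (AddLiteral, RemoveLiteral).

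For case (a), I would check that the rightmost-selection conditions of the rules give exactly the required trail shapes. For CreateInstance and FactorizeWatched, conditions \ref{createInstance:firstRightmost}/\ref{createInstance:secondRightmost} (resp.\ \ref{factorizeWatched:firstRightmost}/\ref{factorizeWatched:secondRightmost}) prefer true literals, then unassigned ones. A false watched literal is chosen only if no better option exists, and then its complement is the rightmost among complements of literals of $C\sigma$, respectively of $C\sigma\setminus\{L_j\sigma\}$ for the second one. This yields properties \ref{def:fastback:onefalse} and \ref{def:fastback:bothfalse} directly. If a watched literal is true, property \ref{def:fastback:falsetrue} must be checked: a false literal is chosen only when no true literal remains, so a watched true literal implies the other watched literal was chosen as true or from the unchanged $L_2$. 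For AddClause, $L_1$ is unassigned and $L_2$ is chosen rightmost by \ref{addClause:secondRightmost}, giving property \ref{def:fastback:onefalse}. For UpdateWatched, the new literal is fresh and true or unassigned, so the instance either leaves the scope of the properties or falls under property \ref{def:fastback:onefalse} (or \ref{def:fastback:falsetrue}) with $L_2$ now the false one. I still need $\comp(L_2)$ to be rightmost among the complements of the remaining literals, and I would obtain this from the induction hypothesis on the old configuration.

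For case (b), RemoveLiteral pops the last literal. Each property only states that certain complements do not occur in a trail suffix, or that some complements occur in a prefix. Removing the last element either shrinks the suffix, or unassigns a watched literal, which moves the instance into a weaker case (e.g.\ from \ref{def:fastback:bothfalse} to \ref{def:fastback:onefalse}). I expect this to need only a careful check that the last literal, if it falsifies something in $C$, must be the complement of a watched literal, by the suffix conditions. AddLiteral is the main obstacle. Appending $L$ can make a watched literal false or true while non-watched literals are already false. Under a reasonable strategy, however, all internal rules were exhausted before AddLiteral, so \Cref{lem:invexhaust} holds for the old trail $M$. I would argue as follows. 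If $L$ falsifies a watched literal $L_1$ with $L_2$ unassigned, then $\comp(L_1)=L$ is the last trail literal, so the suffix after it is empty and property \ref{def:fastback:onefalse} holds trivially. If both watched literals become false, one of them was already false before, and \Cref{lem:invexhaust}.\ref{lem:invexhaust:exhauststillfalse} plus property \ref{def:fastback:onefalse} of the hypothesis give the ordering needed for \ref{def:fastback:bothfalse}. If $L$ makes $L_2$ true while $L_1$ is false, the first disjunct of property \ref{def:fastback:falsetrue} fails, so I must show the second disjunct. This requires that all $C'$ literals were falsified before $\comp(L_1)$, which is exactly what property \ref{def:fastback:onefalse} gave before the step. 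If $L$ falsifies a non-watched literal of $C'$, the invariants could break. Here I would use the fact that $C$ was then neither handled by exhaustion nor has a false watched literal whose suffix condition is violated: either a watched literal is true, which falls under \ref{def:fastback:falsetrue} and is preserved because appending keeps prefixes, or both are unassigned, which is outside the scope of the properties.
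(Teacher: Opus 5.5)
\paragraph{Verdict.} Your overall architecture is the paper's: simultaneous induction over the reasonable run with a case split on the last rule. You use the rule premises for CreateInstance, FactorizeWatched and AddClause, the induction hypothesis for RemoveLiteral and UpdateWatched, and \Cref{lem:invexhaust} at the exhausted state before AddLiteral. However, three of your justifications do not work as written.

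\paragraph{AddLiteral, new literal makes $L_2$ true while $L_1$ is false.} You derive the second disjunct of property~\ref{def:fastback:falsetrue} from property~\ref{def:fastback:onefalse} alone. Property~\ref{def:fastback:onefalse} only says that no literal of $C'$ is falsified \emph{after} $\comp(L_1)$. It says nothing about whether the literals of $C'$ are false at all, yet the second disjunct demands $\comp(L)\in M'_2$ for every $L\in C'$. The missing ingredient is \Cref{lem:invexhaust}, part~\ref{lem:invexhaust:exhauststillfalse}, applied to the state before AddLiteral: there $L_1$ is false and $L_2$ is not true, hence $M\models\neg C'$. Together with property~\ref{def:fastback:onefalse}, this places all complements of $C'$ left of $\comp(L_1)$, which is exactly the paper's argument. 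In your both-false case it is the other way round: property~\ref{def:fastback:onefalse} of the hypothesis already suffices there.

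\paragraph{CreateInstance and property~\ref{def:fastback:falsetrue}.} Your claim is not right: a true watched literal does not force the other to be true or inherited. The actual reason is that the false watched literal is never the inherited $L_2$, since that one is kept only if $\comp(L_2\sigma)\notin M$. So it was picked in the ``otherwise'' branch of (iv) or (v). That branch makes every literal of $C'$ false with complement to its left, which yields whichever disjunct matches the position of the true one.

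\paragraph{UpdateWatched.} You plan to get ``$\comp(L_2)$ rightmost among the remaining complements'' from the induction hypothesis on the old instance, i.e.\ from property~\ref{def:fastback:bothfalse}. That property does not say which of $\comp(L_1)$, $\comp(L_2)$ is rightmost, since the watched literals are interchangeable. Nor does it say that $C'$ is false.
\begin{itemize}
\item If $\comp(L_1)$ lies right of $\comp(L_2)$, the replaced $L_1$ enters the new $C'$ with its complement after $\comp(L_2)$, so property~\ref{def:fastback:onefalse} fails.
\item For property~\ref{def:fastback:falsetrue} you would additionally need the new $C'$ to be entirely false.
\end{itemize}
The paper's proof makes the same appeal to property~\ref{def:fastback:bothfalse} and has the same hole.

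\paragraph{Suggested repair.} Show that this case is vacuous. Add to the induction the invariant: if both watched literals of an instance are false, then all its literals are false. It is preserved as follows.
\begin{itemize}
\item AddLiteral: an instance whose second watched literal just became false had $C'$ false by \Cref{lem:invexhaust}, part~\ref{lem:invexhaust:exhauststillfalse}.
\item CreateInstance and FactorizeWatched: a false watched literal is chosen only when no true or unassigned one is available.
\item UpdateWatched and AddClause: trivially, since the new watched literal is not false.
\item RemoveLiteral: by property~\ref{def:fastback:bothfalse}, the complements of $C'$ lie left of both watched complements.
\end{itemize}
With this invariant, UpdateWatched is never applicable when the other watched literal is false, because no fresh non-false literal exists. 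Its result therefore always has an unassigned $L_2$ and lies outside all three properties.
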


\begin{proof}
	The proof is by induction on the number of applied rules.
	For the empty start state, all three properties are obviously true because $M=\varepsilon$, so no literals can be false.
	The induction step is shown by a case distinction on the last applied rule.
	For readability, we do not mention instances that the respective rule does not modify unless the rule modifies the trail.
	DetectPropLiteral and Conflict do not change the instances or the trail, so if they are the last applied rule, all three statements follow from the induction hypothesis.
	For RemoveLiteral, all three statements obviously follow from the induction hypothesis as well.

	\medskip\noindent\ref{def:fastback:onefalse}.
	CreateInstance asserts this in its premise as \Cref{createInstance:secondRightmost}.
	For UpdateWatched, let $(\overline{C};\overline{L_j};\overline{L_2})$ be the added instance and $\overline{L_1}$ be the previously watched literal.
	Then, to satisfy the condition of the statement we want to prove, $\overline{L_2}$ must be false in $M$ and $\overline{L_j}$ must be undefined.
	Hence, for the previous state, both watched literals $\overline{L_1}$ and $\overline{L_2}$ must have been false, and the statement follows from \Cref{def:fastback:bothfalse} of the induction hypothesis.
	For FactorizeWatched, let $(\overline{C}\sigma;\overline{L_j}\sigma;\overline{L_k}\sigma)$ be the instance added from $(\overline{C};\overline{L_1};\overline{L_2})$.
	If $\overline{L_j}$ is false, then the statement is asserted in the premise of the rule as \Cref{factorizeWatched:firstRightmost}.
	If $\overline{L_k}$ is false, then it is asserted in the premise of the rule as \Cref{factorizeWatched:secondRightmost}.
	For AddLiteral, let $(C;L_1;L_2)\in O$ be an instance that satisfies the condition of the statement we want to prove.
	Assume there is a literal $L\in C'$ that is false in $M''$.
	By the induction hypothesis, it must hold that $L=\comp(\overline{L})$ where $\overline{L}$ is the literal that AddLiteral added.
	This implies the presence of two unassigned literals $L$ and $L_2$ in the previous state, and by \Cref{lem:invexhaust:exhaustbothunassign} of \Cref{lem:invexhaust}, AddLiteral cannot be applied to the previous state in a reasonable strategy.
	AddClause asserts this in its premise as \Cref{addClause:secondRightmost}.

	\medskip \noindent\ref{def:fastback:bothfalse}.
	CreateInstance asserts this in its premise as \Cref{createInstance:firstRightmost,createInstance:secondRightmost}.
	UpdateWatched cannot create such an instance.
	FactorizeWatched asserts this in its premise as \Cref{factorizeWatched:firstRightmost,factorizeWatched:secondRightmost}.
	For AddLiteral, let $(C;L_1;L_2)\in O$ be an instance that satisfies the condition of the statement we want to prove.
	Assume there is a literal $L\in C'$ that is false in $M''M'''$.
	Let $\overline{L}$ be the literal that AddLiteral added.
	If $L_2=\comp(\overline{L})$, then by \Cref{def:fastback:onefalse} of the induction hypothesis, such an $L\in C'$ cannot exist.
	Otherwise, $L_1$ and $L_2$ are also false in the previous state.
	By \Cref{def:fastback:bothfalse} of the induction hypothesis, it must hold that $L=\comp(\overline{L})$.
	This means that in the previous state, $L$ was unassigned, and in a reasonable run, AddLiteral is not applicable to the previous state by \Cref{lem:invexhaust:exhaustbothfalse} of \Cref{lem:invexhaust}.
	AddClause cannot add such an instance because one of the literals in the added instance is unassigned.

	\medskip \noindent\ref{def:fastback:falsetrue}.
	For CreateInstance, let $(\overline{C};\overline{L_1};\overline{L_2})$ be the instance that the instance in question was created from and let $\sigma$ be the used substitution.
	First assume that $\comp(\overline{L_2}\sigma)\notin M$.
	It follows that $L_2=\overline{L_2}\sigma$.
	In this case, the statement is asserted in the premise of the rule as \Cref{createInstance:firstRightmost}.
	Now assume that $\comp(\overline{L_2}\sigma)\in M$.
	Then, the statement is asserted in the premise of the rule as \Cref{createInstance:firstRightmost,createInstance:secondRightmost}.
	For UpdateWatched, in the previous state, both watched literals must have been false for the instance in question, and by \Cref{def:fastback:bothfalse} of the induction hypothesis, the statement follows.
	The argument for FactorizeWatched works analogously to the argument we used for CreateInstance.
	For AddLiteral, we only have to argue about the case where $\comp(L_1)$ occurs to the left of $L_2$ in $M$.
	If $L_2$ is the literal added by AddLiteral, then this follows from \Cref{lem:invexhaust:exhauststillfalse} of \Cref{lem:invexhaust} together with \Cref{def:fastback:onefalse} of the induction hypothesis.
	If $L_2$ is not the literal added by AddLiteral, then this trivially follows by the induction hypothesis.
	AddClause cannot add such an instance because one of the literals in the added instance is unassigned.
\end{proof}

\noindent
The next lemma characterizes which internal rule applications have to be considered after applying a steering rule.

\begin{lemma}[Incremental Handling of Trail Updates] \label{lem:applcheckwatched}
For a run with a reasonable strategy the following hold:
\begin{enumerate}
\item Consider a derivation
\begin{align*}
& \Rightarrow_{\text{TWFO}}^{\text{AddLiteral}(K)} &&  (MK; O_0; F_0; \top)\\
& \Rightarrow_{\text{TWFO}}^{\text{(Internal Rules)}^*} &&(MK; O; F; \top). \\
\end{align*}
Then in $(MK; O; F; \top)$ all rule applications of the rules CreateInstance, UpdateWatched, FactorizeWatched, DetectPropLiteral, Conflict are only applicable to instances $(C; L_1; L_2) \in O$ which fulfil at least one of the following conditions:
\begin{enumerate}
    \item $\comp(K) = L_1$ or $\comp(K) = L_2$
    \item the instance $(C; L_1; L_2)$ has been generated by an application of CreateInstance, UpdateWatched or FactorizeWatched since the last invocation of AddLiteral($K$).
\end{enumerate} \label{lem:applcheckwatched:afteraddlit}

\item If RemoveLiteral was used to reach the state $(M; O; F; D)$, the rules CreateInstance, UpdateWatched, FactorizeWatched, DetectPropLiteral, Conflict are not applicable. \label{lem:applcheckwatched:afterremove}

\item If AddClause($C$) was applied to reach the state $(M; O; F; \top)$, where now $(C; L_1; L_2)^{\top} \in O$, then CreateInstance, UpdateWatched, FactorizeWatched, Conflict, and DetectPropLiteral are only applicable to the instance $(C; L_1; L_2)$ and newly created instances thereof, i.e., those are the only instances that need to be checked for rule applications. \label{lem:applcheckwatched:afteraddclause}

\end{enumerate}
\end{lemma}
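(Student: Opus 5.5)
The plan is to use the fact that, in a reasonable run, every steering rule is applied to a state in which all internal rules are exhausted. So before the steering rule, \Cref{lem:invexhaust}, \Cref{lem:prop} and \Cref{lem:watchedInvariants} hold, and the applicability conditions of CreateInstance, UpdateWatched, FactorizeWatched, DetectPropLiteral and Conflict fail for every instance in $O$. For each part, I will check which premises can change their truth value under the steering rule. I will use that every premise except the duplicate check of CreateInstance/FactorizeWatched and the $F$-check of DetectPropLiteral depends only on the watched literals and on $M$, and that FactorizeWatched does not depend on $M$ at all.

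For part~\ref{lem:applcheckwatched:afteraddlit}, I argue by induction over the internal rule applications after AddLiteral($K$). Take an instance $(C;L_1;L_2)$ that existed before AddLiteral and was not modified since, and suppose $\comp(K)\notin\{L_1,L_2\}$.
\begin{itemize}
\item \emph{FactorizeWatched.} It is unaffected by the trail, so it was already inapplicable.
\item \emph{Premises about the watched literals.} Premises of the form ``$\comp(L_i\sigma)\in M$'' or ``$\comp(L_i)\in M$'' can only newly hold if $L_i\sigma=\comp(K)$ or $L_i=\comp(K)$. For ground $L_i$ this is excluded by assumption.
\item \emph{CreateInstance with non-ground $L_1$ matching $\comp(K)$.} This is the delicate case. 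I would handle it by reading condition (a) as covering instances of $L_1$, since an implementation looks watched literals up through generalisations, as the baseline does with the discrimination tree. I expect this step to need care in phrasing.
\item \emph{UpdateWatched and the $L_2\notin M$ premise.} Adding $K$ can only make $L_2$ true, which disables rules rather than enabling them.
\item \emph{DetectPropLiteral.} The premise $M\models\neg C_0$ can newly hold only if a non-watched literal becomes false. If $L_1$ was already false before AddLiteral, then \Cref{lem:invexhaust}.\ref{lem:invexhaust:exhauststillfalse} says $C_0$ was already false. Moreover, DetectPropLiteral was inapplicable before, which forced either an earlier $F$ entry (still present) or $L_2$ assigned (still assigned).
\item \emph{Conflict.} Conflict needs both watched literals false. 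Since neither is $\comp(K)$, both were false before, so by \Cref{lem:invexhaust}.\ref{lem:invexhaust:exhaustbothfalse} the clause was already false. By reasonableness, Conflict would then have fired before AddLiteral.
\item \emph{Newly created instances.} These are covered by condition (b).
\end{itemize}

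For part~\ref{lem:applcheckwatched:afterremove}, RemoveLiteral removes the last literal $L$. By \Cref{lem:watchedInvariants}, if some non-watched literal of an instance is false because of $L$, then a watched literal is false or true at a position at least as far right. Removing $L$ therefore unassigns watched literals before any non-watched ones. Consequently the hypotheses of UpdateWatched and Conflict stay false (watched literals only move towards unassigned), and CreateInstance finds no new false instance of a watched literal. For DetectPropLiteral, if an instance had one watched literal false and propagated, then either that literal is still false and its $F$ entry, annotated with a literal left of $L$, survives, or its annotation was $L$. In the latter case $L$ was the rightmost falsifier, so $L_1$ becomes unassigned and the rule is no longer applicable. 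The main obstacle is this interplay with the $F$-annotations. I would resolve it using \Cref{lem:inv}.\ref{lem:inv:propkontrail}: a surviving entry is annotated left of $L$, so its propagation remains valid and blocks reapplication. I would also observe that RemoveLiteral requires $D\neq\top$, so the Conflict state persists and no internal rule other than those states is needed.

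For part~\ref{lem:applcheckwatched:afteraddclause}, AddClause leaves $M$ unchanged and only adds $(C;L_1;L_2)^\top$ and possibly a unit to $F$. Every old instance therefore keeps exactly the premises it had before, all of which were false. The only exception is DetectPropLiteral's $F$-check, and adding to $F$ only disables that rule further. Hence only the new instance and its descendants can trigger rules.
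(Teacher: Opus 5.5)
Part~\ref{lem:applcheckwatched:afteraddclause} has a genuine gap. You assert that every old instance ``keeps exactly the premises it had before, all of which were false''. But AddClause requires $D\neq\top$. In such a state every internal rule is blocked simply because its left-hand side demands $D=\top$, so reasonableness says nothing about the remaining premises there. The last state in which they were exhausted is the one right before Conflict, on a trail $M'$ of which the current $M$ is a proper prefix; criterion~\ref{crit:addclause:confresolved} forces at least one RemoveLiteral. Since then, RemoveLiteral has shrunk both $M$ and $F$.

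What is missing is an argument that this backtracking re-enables nothing. The paper supplies it by reasoning back to the pre-Conflict state. For UpdateWatched this needs \Cref{lem:watchedInvariants} in two ways. First, as long as the other watched literal is not true, a false watched literal is rightmost among the clause's falsifiers, so it becomes unassigned before any false non-watched literal does. Second, if a \emph{true} watched literal $L_2$ is removed while $L_1$ stays false, the premise $L_2\notin M$ becomes newly true. Here you need the second alternative of \Cref{lem:watchedInvariants}.\ref{def:fastback:falsetrue}, which applies because $L_2$ lies right of $\comp(L_1)$. It shows that every non-watched literal is falsified left of $\comp(L_1)$, so no fresh candidate exists. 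For Conflict, you need that $M$, being a proper prefix of $M'$, was once extended by AddLiteral from an exhausted state and so carried no conflict.

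Your part-\ref{lem:applcheckwatched:afterremove} material is where this kind of reasoning belongs, but it is misplaced and, as written, would not close the gap. Part~\ref{lem:applcheckwatched:afterremove} itself is immediate. RemoveLiteral requires $D\neq\top$ and leaves $D$ unchanged, while every internal rule needs $D=\top$. That is the paper's entire proof, and your last sentence only half-states it.

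Conversely, your claim that watched literals ``only move towards unassigned'', so the hypotheses of UpdateWatched ``stay false'', misses exactly the true-to-unassigned case above. Also, ``stay false'' has no base case: right after Conflict, the Conflict premises do hold for the conflicting instance.

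Part~\ref{lem:applcheckwatched:afteraddlit} is fine, and more detailed than the paper's ``it could already have been applied before AddLiteral'' argument. Your concern about non-ground CreateInstance is justified. With literal equality in (a), the claim fails when $\comp(K)$ is a proper instance of a non-ground watched literal. The intended reading is the ``unifiable'' one of \Cref{lem:efficientApplication}.
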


\begin{proof}
    \

    \noindent
    \ref{lem:applcheckwatched:afteraddlit}.
Let $(M'; O'; F'; \top)$ be the state right before application of AddLiteral($K$), and $(MK; O; F; \top)$ be the state after the application of AddLiteral($K$) and an arbitrary number of applications of UpdateWatched, FactorizeWatched, Conflict, CreateInstance, DetectPropLit. Further, assume that the run followed a reasonable strategy.

    Assume an application of any of the rules UpdateWatched, FactorizeWatched, Conflict, CreateInstance, DetectPropLit to an instance $(C; L_1; L_2)$. We prove by contradiction that either $(C; L_1; L_2) \not\in O'$ or $\comp(K) = L_1$ or $\comp(K) = L_2$. Thus, assume the negation holds, i.e., $(C; L_1; L_2) \in O'$ and $\comp(K) \not= L_1$, $\comp(K) \not= L_2$. Then, the respective rule application could have already happened to state $(M'; O'; F'; \top)$, as the instance was already generated, and all other preconditions except conditions on $\comp(L_1) \in M$ do not change when applying any of the rules. However, this contradicts reasonableness.

    \medskip\noindent\ref{lem:applcheckwatched:afterremove}.
    Let $(M; O; F; D)$ be a state reached during a run with a reasonable strategy, where the last applied rule was RemoveLiteral. In this case, it must hold that $D \not= \top$. This rules out any application of UpdateWatched, FactorizeWatched, Conflict, CreateInstance, DetectPropLit.

    \medskip\noindent\ref{lem:applcheckwatched:afteraddclause}.
    Assume that AddClause($C$) was applied to reach the state $(M; O; F; \top)$ in a reasonable run, with $(C; L_1; L_2)^{\top} \in O$. Note that a reasonable derivation that ends in such a scenario is always preceded by a Conflict step, followed by an arbitrary number of RemoveLiteral steps. Note that there must be at least one RemoveLiteral step, as otherwise the previous conflict would still exist, which violates criterion \ref{crit:addclause:confresolved} of AddClause.
	Let $(M'; O'; F; \top)$ be the step right before the rule Conflict is applied. Note that, by definition, $F$ does not change, $O = O' \cup \{(C; L_1; L_2)\}$, and $M \subsetneq M'$ as RemoveLiteral was applied at least once.

    The last step, AddClause($C$), has set the conflict clause $D$ to $\top$, potentially enabling new rule applications. For the sake of a proof by contradiction, assume there is a rule application of CreateInstance, UpdateWatched, FactorizeWatched, Conflict, or DetectPropLiteral that is not to the instance $(C; L_1; L_2)$. Given the definition of $O$, this means that the rule also could be applied to $(M; O'; F; \top)$.

    The rule Conflict cannot be applied to $(M; O'; F; \top)$. Conflict was applied to $(M'; O'; F; \top)$, where $M' \supsetneq M$. Thus, $M'$ was created by AddLiteral, which would not have been applicable in a reasonable run if Conflict would have been.

    The applied rules also cannot be FactorizeWatched, CreateInstance or DetectPropLiteral. If they could be applied to $(M; O'; F; \top)$, by $M \subseteq M'$ they could also be applied to $(M; O; F; \top)$. As Conflict was applied to $(M; O; F; \top)$, this would contradict reasonableness. It remains to show that UpdateWatched cannot become applicable. The rule could become applicable due to two reasons:

    (i) By considering a literal $L_j \not\in M'$, but $L_j \in M$, the literal $L_j$ could become unassigned. Thus, it might be required to watch this unassigned literal instead of a false one. However, this case cannot happen, as by \Cref{lem:watchedInvariants}, if $L_j$ was watched, it was rightmost in $M$. As RemoveLiteral can only remove literals in order, the watched false literal must become unassigned first. So no rule invocation is required.

    (ii) The rule could be applicable if a true literal is removed from $M$, giving the need to update watched literals. Without loss of generality, assume that $\comp(L_1) \in M'$ and $L_2 \in M'$, but $L_2 \not\in M$. Then, either $L_1$ is unassigned in $M'$, in which case UpdateWatched is not applicable, or $\comp(L_1) \in M'$. In the second case, by \Cref{lem:watchedInvariants}, it must hold that $L_2$ is the only unassigned literal. Thus, applying UpdateWatched is not possible.

    The proof for newly created instantiations of $(C; L_1; L_2)$ is analogous to the proof in \ref{lem:applcheckwatched:afteraddlit}.
\end{proof}

\noindent
The following \Cref{lem:efficientApplication} captures the essential properties that make the two-watched literal scheme efficient.
Intuitively, it states that after any incremental problem change (i.e. the application of a steering rule), the set of applicable internal rules is heavily restricted. This allows to omit checks for impossible rules in an implementation.
For example, after an application of AddLiteral($L$), only clause instances that have a literal unifiable with the complementary literal of $L$ watched need to be considered.
This mechanism is key for the efficiency of the two-watched literal scheme: All clause instances that \emph{do not watch} such a literal can be completely ignored at this point.

\begin{restatable}[Two-Watched Literal Scheme]{lemma}{lemmaTwoWatched}
\label{lem:efficientApplication}
  In a reasonable run from an empty start state, the update rules can only be applied if the following holds:
	\begin{enumerate}
		\item After AddLiteral($L$), only clause instances with a watched literal unifiable with $\comp(L)$, and newly created instances thereof, need to be considered for updates.
		\item After AddClause($C$), only the newly added clause, and newly created instances thereof, need to be considered for updates.
		\item After RemoveLiteral, no update rules are applicable.
	\end{enumerate}
\end{restatable}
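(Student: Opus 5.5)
The plan is to derive \Cref{lem:efficientApplication} from \Cref{lem:applcheckwatched}. Since the run is reasonable and starts from the empty start state, \Cref{lem:watchedInvariants} is available at every state. The three parts of the statement correspond to the three parts of \Cref{lem:applcheckwatched}, and I will treat them one after the other.

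For the first part, I start from \Cref{lem:applcheckwatched}.\ref{lem:applcheckwatched:afteraddlit}. After AddLiteral($L$) followed by any number of internal steps, every applicable internal rule acts on one of two kinds of instance:
\begin{enumerate*}[label=(\alph*)]
\item an instance that already existed before AddLiteral($L$), or
\item an instance created since then by CreateInstance, UpdateWatched or FactorizeWatched.
\end{enumerate*}
For case (a), the proof of \Cref{lem:applcheckwatched}.\ref{lem:applcheckwatched:afteraddlit} shows that the only precondition that can change is one of the form ``$\comp(L_i)\in M$'', or a ground instance of $L_i$ becoming false.

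The lemma states this with equality $\comp(L)=L_i$, but for CreateInstance the relevant condition is $\comp(L_i\sigma)\in ML$. Here $\sigma$ is a minimal grounding substitution, and if $\comp(L_i\sigma)=L$ then $L_i$ is unifiable with (indeed generalizes) $\comp(L)$. So I will extend case (a) from ``equal to $\comp(L)$'' to ``unifiable with $\comp(L)$''. For ground watched literals the two notions coincide. Instances created in case (b) are exactly the ``newly created instances thereof'' mentioned in the statement.

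The main obstacle lies in the rule Conflict, and in UpdateWatched when the watched literal is non-ground. These preconditions mention $M\models\neg C$ and truth values of literals other than the watched ones, so adding $L$ could in principle enable them even though no watched literal is affected. To rule this out I will use \Cref{lem:invexhaust}, which holds just before AddLiteral because the strategy is reasonable. If neither watched literal unifies with $\comp(L)$, then each watched literal keeps its truth status. The other literals can only switch from unassigned to false. Now apply \Cref{lem:invexhaust}.\ref{lem:invexhaust:exhaustbothunassign}--\ref{lem:invexhaust:exhaustbothfalse}:
\begin{enumerate*}[label=(\roman*)]
\item if a watched literal was true, it is still true, and the instance stays satisfied;
\item if both watched literals were unassigned, they remain unassigned, so neither UpdateWatched nor Conflict is applicable;
\item if one watched literal was false and the other was not true, then all unwatched literals were already false, so nothing changes.
\end{enumerate*}
The non-ground case then reduces to CreateInstance, as argued above.

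For the second part, I invoke \Cref{lem:applcheckwatched}.\ref{lem:applcheckwatched:afteraddclause} directly. For the third part, I invoke \Cref{lem:applcheckwatched}.\ref{lem:applcheckwatched:afterremove}: RemoveLiteral requires $D\neq\top$ and does not reset $D$, and every internal rule requires the conflict component to be $\top$. So no update rule is applicable after RemoveLiteral.
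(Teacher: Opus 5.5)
Your proposal is correct and takes the paper's route: the paper proves this lemma in one line as a corollary of \Cref{lem:applcheckwatched}, matching its three parts one-to-one as you do. Your extra arguments fill in details that the one-line proof, and the proof of \Cref{lem:applcheckwatched} itself, leave implicit. These are widening the condition ``$\comp(L)=L_i$'' to unifiability so that CreateInstance on non-ground watched literals is covered, and using \Cref{lem:invexhaust} just before AddLiteral to show that Conflict, UpdateWatched and DetectPropLiteral cannot be enabled through unwatched literals.
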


\begin{proof}
	Corollary of \Cref{lem:applcheckwatched}.
\end{proof}

\noindent

\section{Evaluation}\label{section:evaluation}

We have implemented both schemes presented in \Cref{sec:naive,sec:calculus} as part of the SPASS(SCL) prover \cite{BSW23SCL,CASC30}. First, we give some comments on implementation details and continue with analyzing the performance of both approaches.

\subsection{Implementation}

The two-watched literal scheme is implemented as follows. An instance consists of a clause and a substitution. Both the clause and
the codomain terms in the substitution are shared. Two instances are equal if the clause and the substitution are equal. This may
lead to some superfluous instances but is far more efficient than checking the actual instantiated clauses for equality.
The watched literals as well as the trail are stored in a standard discrimination tree index and path index, respectively~\cite{McCune92a,NieuwenhuisHRV01}.

Now if a new literal on the trail appears, potential watched literals are selected from the watched literal index by
a query searching for generalizations. Then the instances are processed, i.e., watched literals may be exchanged, new
instances may be created including factoring, and propagating literals and conflicts are detected. The data structures
are updated accordingly. Duplicate instances are detected via two-level hashing, where the first level key is the
clause and the second level key the substitution.

\subsection{Experimental Evaluation}
In the following, we compare the two-watched literal scheme for first-order logic (TWFO) with the baseline dynamic programming (DP) approach. To this end, we run SPASS(SCL) on each non-equality TPTP v9.2.1 \cite{Sut17TPTP} problem. For a given problem, both approaches are run concurrently. To our knowledge, there are no other comparable approaches published that can be used for benchmarking.

\subsubsection{Runtime}
To compare the runtimes of both approaches, the prover is run with an overall time limit of 60 seconds.
The time limit includes the runtime of both algorithms, as well as all other steps performed by SPASS(SCL).
In the following, for each problem, we plot the \emph{share} of the total runtime that was spent in the respective approach. By design, the sum of both shares cannot exceed $60$ seconds.
We ensure at runtime that both algorithms are cut off at the same point, i.e., they have performed the exact same computations and yielded equivalent results.

\begin{figure}[H]
\centering
\begin{minipage}{0.48\textwidth}
	\centering
\includegraphics[height=17em,alt={Runtime of TWFO plotted against runtime of DP on $3459$ problems. There are significantly more points below the main diagonal starting from $0.01\text{s}$, i.e., TWFO outperforms DP on most larger problems, while DP outperforms TWFO on very small problems.}]{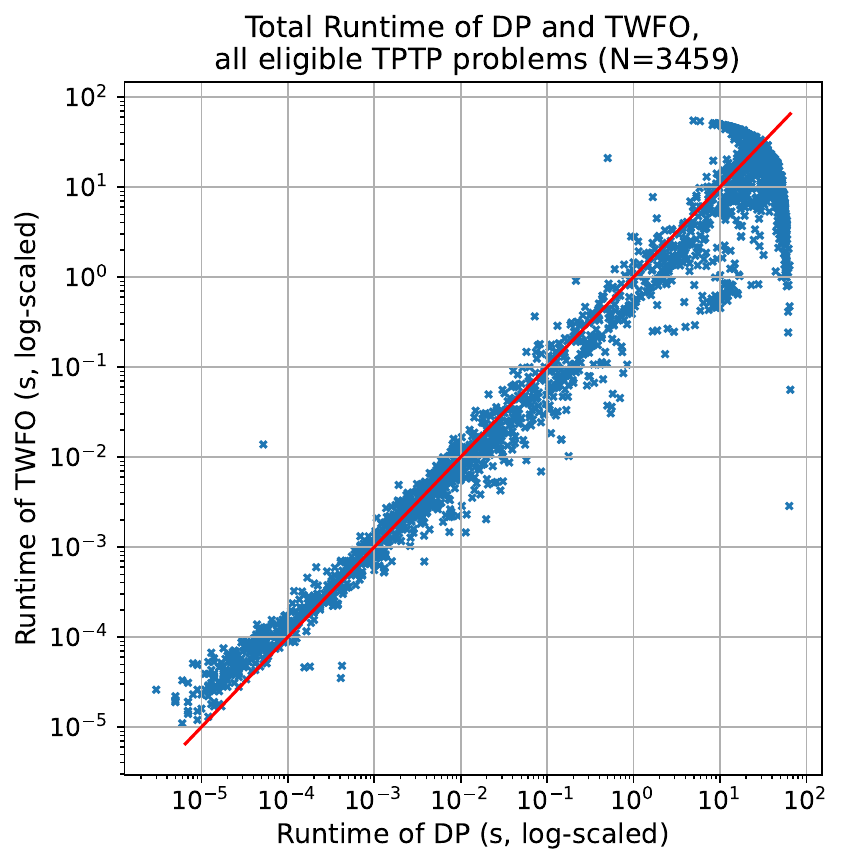}
	\captionof{figure}{Runtime comparison on all eligible TPTP v9.2.1 problems.}
	\label{fig:total-runtime-all}
\end{minipage}\hfill
\begin{minipage}{0.48\textwidth}
	\centering
\includegraphics[height=17em,alt={Runtime of TWFO plotted against runtime of DP on $820$ problems. This time, the majority of the points is below the main diagonal, especially for larger problems.}]{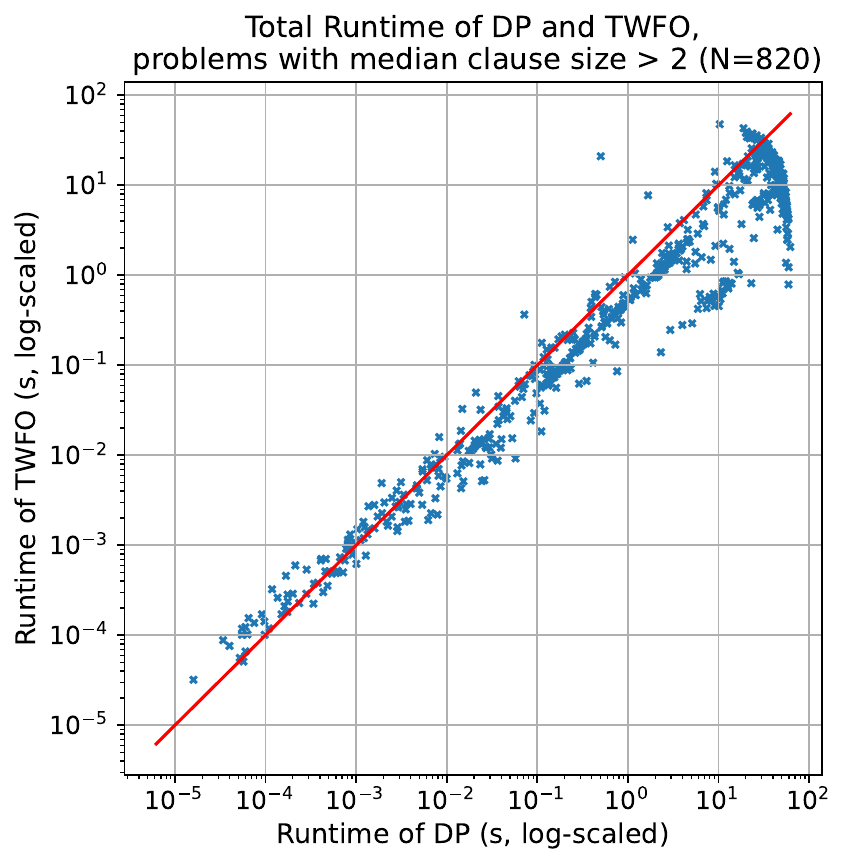}
	\captionof{figure}{Runtime on problems with median clause length $>2$.}
	\label{fig:total-runtime-len2}
\end{minipage}

\vspace{0.5em}

\begin{minipage}{0.48\textwidth}
	\centering
\includegraphics[height=17em,alt={Runtime of TWFO plotted against runtime of DP on $95$ problems. All points are belong the main diagonal, most of them far below.}]{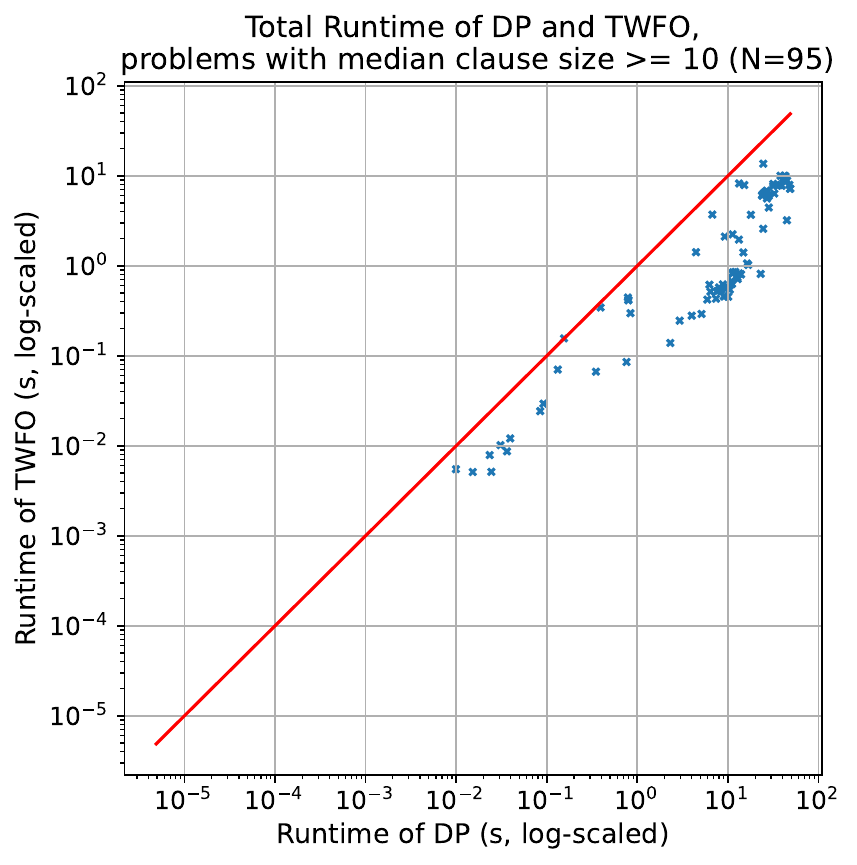}
	\captionof{figure}{Runtime on problems with median clause length $\ge 10$.}
	\label{fig:total-runtime-len10}
\end{minipage}\hfill
\begin{minipage}{0.48\textwidth}
	\centering
\includegraphics[height=17em,alt={Instances generated by TWFO plotted against instances generated by DP on $3459$ problems. There are only very few points above the main diagonal, many points are far below the main diagonal.}]{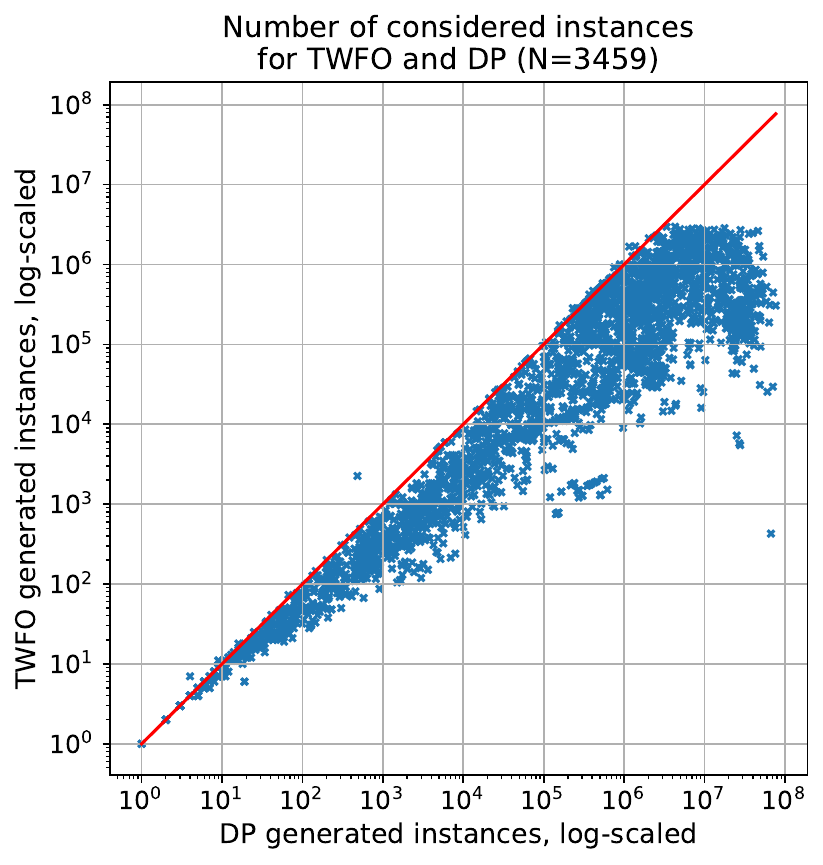}
	\captionof{figure}{Number of considered instances on all eligible problems.}
	\label{fig:instances}
\end{minipage}
\end{figure}

\Cref{fig:total-runtime-all} shows the runtime share of TWFO plotted against the DP approach. The DP approach, in general, outperforms TWFO only on very easy problems that are solved very fast. This stems from the initialization overhead of the TWFO data structures. This changes for larger problems and TWFO often outperforms the baseline DP algorithm, sometimes by orders of magnitude. In the average case, TWFO is faster if the DP approach would spend more than roughly $10\text{ms}$ of runtime with propagation and conflict detection. 
Larger instances in which no significant speedup, or even a slowdown occurs, are often characterized by the presence of many (or exclusively) short clauses. In such cases, a two-watched literal scheme structurally cannot improve performance.

In \Cref{fig:total-runtime-len2}, only problems that have a median clause size greater than $2$ are shown. This excludes problems where, in the majority of cases, improvements cannot be achieved by a two-watched literal scheme. Overall, for such problems, there is a clear trend showing that TWFO outperforms the DP algorithm on the vast majority of larger problems.
\Cref{fig:total-runtime-len10} shows only problems with predominantly large clauses, i.e., clauses that have a median size of at least $10$. Note that the TWFO approach outperforms the DP algorithm in every single problem of this class, very often on the order of a magnitude. The difference in achieved speedup is explained by the number of factors of a clause. TWFO achieves the greatest speedup if the problem clauses do not have a lot of factors. This result is not surprising, as such clauses are similar to propositional clauses, for which two-watched literal schemes are extremely efficient.

Overall, TWFO performs best in the presence of long clauses without many factors. For long clauses, it always outperforms the baseline algorithm, often yielding significant speedups. As clause length can be trivially checked, overall, an implementation which handles clauses differently depending on their length (i.e., long clauses are handled by TWFO, small clauses are handled by the DP approach) seems most promising. 
Similar implementation details for short clauses exist in modern propositional SAT solvers \cite{sorensson2005minisat}.

\subsubsection{Considered Instances}
Another metric we use to compare the approaches is the number of instances that need to be considered during each run.
In the TWFO calculus, instances are easily counted by $|O|$ in a final state $(M; O; F; D)$.
For the baseline dynamic programming approach, we count the number of instances it generates as follows:
Every initial or learned clause directly counts as one instance.
If the clause non-trivially factors to a unit clause, this instance is counted as well.
Last, a new instance is counted whenever a previously unseen substitution is computed.

\Cref{fig:instances} shows that the number of considered instances is almost always smaller in TWFO, and often significantly so. Note that the number of created instances forms a theoretical lower bound on the runtime, abstracting away from the concrete implementation details. This metric clearly shows that, from a theoretical standpoint, the two-watched literal scheme can benefit from significantly fewer considered instances.

\section{Conclusion}\label{section:conclusion}

We have presented two algorithms for first-order conflict and propagation detection: A baseline dynamic programming approach, and a two-watched literal scheme.
Our experimental evaluation shows that the two-watched literal scheme almost always outperforms the baseline approach when measuring the number of instances of a clause that need to be considered. For a total account to propagatable literals extra instances cannot be prevented as shown by the example in the introduction.

Our experiments show that especially on bigger instances with long clauses, the two-watched literal scheme outperforms the baseline approach, often in the order of a magnitude. Thus, implementation of a two-watched literal scheme proves essential for such problems. Overall, our results show that an optimal implementation could default to our dynamic-programming approach for short clauses, while using the proposed two-watched literal scheme for longer clauses.

Although we evaluated our two-watched literal scheme only in a ground trail setting, it should also be successfully applicable in a general
first-order setting, for example when implementing hyperresolution~\cite{RobinsonVoronkov01}. Basically, in this setting we replace instantiation by unification, because
partner literals to the watched clause may have variables, and instead of queries to a ground trail, we query the applicable literals of respective first-order clauses.
For example, in case of a superposition setting, the maximal literals of respective clauses. Depending on the result of the hyperresolution rule, e.g., only positive literals
of Horn clauses are considered, the number of created instances can be further restricted.

\bibliographystyle{splncs04}
\bibliography{twfo}

\end{document}